\documentclass[11pt,reqno]{article}
\usepackage{amsmath,amsfonts,amsthm}
\usepackage[margin=1in]{geometry}
\usepackage[normalem]{ulem}

\usepackage[hyphens]{url}
\usepackage{hyperref}
\usepackage{graphicx}
\usepackage{verbatim}
\usepackage{enumitem} 
\usepackage{color}
\usepackage{tikz}

\numberwithin{equation}{section}

\newtheorem{thm}{Theorem}[section]

\newtheorem{coro}{Corollary}[section]
\newtheorem{lem}{Lemma}[section]
\newtheorem{prop}{Proposition}[section]

\newtheorem{defn}{Definition}[section]
\newtheorem{rem}{Remark}[section]
\newtheorem{eg}{Example}[section]

\renewcommand{\P}{\mathbb{P}}
\newcommand{\Q}{\mathbb{Q}}
\newcommand{\R}{\mathbb{R}}
\newcommand{\E}{\mathbb{E}}

\newcommand{\N}{\mathbb{N}}

\newcommand{\cB}{\mathcal{B}}
\newcommand{\cH}{\mathcal{H}}

\newcommand{\cM}{\mathcal{M}}

\newcommand{\cF}{\mathcal{F}}

\newcommand{\cG}{\mathcal{G}}

\newcommand{\M}{\mathcal{M}}

\newcommand{\A}{\mathcal{A}}

\newcommand{\eps}{\varepsilon}

\newcommand{\fP}{\mathfrak{P}}

\newcommand{\bA}{\text{\bf A}}

\newcommand{\USC}{\operatorname{USC}}
\newcommand{\USA}{\operatorname{USA}}

\newcommand{\nada}[1]{}
\definecolor{gb}{rgb}{0, 0.2, 0.8}

\title{Generalized Duality for Model-Free Superhedging given Marginals
}
\author{Arash Fahim\thanks{Florida State University, Department of Mathematics, Tallahassee, FL 32306-4510, USA, email: \texttt{fahim@math.fsu.edu}. Partially supported by Florida State University CRC FYAP (315-81000-2424) and National Science Foundation (DMS-1209519).}
\and
Yu-Jui Huang\thanks{
University of Colorado, Department of Applied Mathematics, Boulder, CO 80309-0526, USA, email: \texttt{yujui.huang@colorado.edu}. Partially supported by National Science Foundation (DMS-1715439) and the University of Colorado (11003573).}
 \and Saeed Khalili\thanks{
University of Colorado, Department of Mathematics, Boulder, CO 80309-0395, USA, email: \texttt{saeed.khalili@colorado.edu}.}
}
\date{\today}
\begin{document}
\maketitle

\begin{abstract}
In a discrete-time financial market, a generalized duality is established for model-free superhedging, given marginal distributions of the underlying asset. Contrary to prior studies, we do not require contingent claims to be upper semicontinuous, allowing for upper semi-analytic ones. The generalized duality stipulates an extended version of risk-neutral pricing. To compute the model-free superhedging price, one needs to find the supremum of expected values of a contingent claim, evaluated {\it not} directly under martingale (risk-neutral) measures, but along sequences of measures that converge, in an appropriate sense, to martingale ones. To derive the main result, we first establish a portfolio-constrained duality for upper semi-analytic contingent claims, relying on Choquet's capacitability theorem. As we gradually fade out the portfolio constraint, the generalized duality emerges through delicate probabilistic estimations.
\end{abstract}

\textbf{MSC (2010):} 
60G42,  
91G20,  
91G80.  
\smallskip

\textbf{Keywords:} Model-free superhedging, semi-static trading strategies, optimal transport, Choquet's capacitability theorem


\section{Introduction}
Given a finite time horizon $T\in \N$ with $T\ge 2$, let $\Omega:=\R_+^T=[0,\infty)^T$ be the path space and $S$ be the canonical process, i.e. $S_{t}(x_1,x_2,...,x_T)=x_{t}$ for all $(x_1,x_2,...,x_T)\in\Omega$. We denote by $\fP(\Omega)$ the set of all probability measures on $\Omega$. 
For all $t=1,...,T$, let $\mu_t$ be a probability measure on $\R_+$ that has 
finite first moment; namely,
\begin{equation}\label{finite moment}
m(\mu_t):=\int_{\R_+}yd\mu_t(y)<\infty. 
\end{equation}
The set of admissible probability measures on $\Omega$ is given by
\begin{equation}\label{defn:Pi}
\Pi:=\left\{\Q\in\fP(\Omega): \Q\circ(S_t)^{-1} = \mu_{t},\ \forall t=1,...,T\right\},
\end{equation}
which is known to be nonempty, convex, and compact under the topology of weak convergence, thanks to \cite[Proposition 1.2]{Kellerer84}. We further consider
\begin{equation}\label{defn:cM}
\cM:=\left\{\Q\in\Pi: S\ \hbox{is a $\Q$-martingale}\right\}.
\end{equation}
Note that $\cM\neq\emptyset$ if and only if $\mu_1,...,\mu_T$ possess the same finite first moment and increase in the convex order (i.e. $\int_{\R_+}f d\mu_1\le \int_{\R_+}f d\mu_2\le...\le \int_{\R_+}f d\mu_T$, for convex $f:\R_+\to\R$); see \cite{Strassen65}. We will assume $\cM\neq\emptyset$ throughout this paper. 

The current setup is motivated by a financial market that involves a risky asset, represented by $S$, and abundant tradable options written on it. For instance, if the tradable options at time 0 include vanilla call options, with payoff $(S_{t}-K)^+$, for all $t=1,\cdots,T$ and $K\ge 0$, then the current market prices $C(t,K)$ of these call options already prescribe the distribution of $S_{t}$, for each $t=1,...,T$, under any pricing (martingale) measure.\footnote{By \cite[Proposition 2.1]{HR12}, for each fixed $t$, as long as $K\mapsto C(t,K)$ is convex and nonnegative, $\lim_{K\downarrow 0+}\partial_KC(t,K)\ge-1$, and $\lim_{K\to\infty}C(t,K)=0$, the relation ``$\E_\Q[(S_{t}-K)^+]=C(t,K)$ for all $K\ge 0$'' determines the distribution of $S_{t}$. That is, $\Pi$ in \eqref{defn:Pi} can be expressed as $\left\{\Q\in\fP(\Omega):\E_\Q[(S_{t}-K)^+]=C(t,K),\ \forall t=1,\cdots,T\ \hbox{and}\ K\ge 0\right\}$.
}

A path-dependent contingent claim $\Phi:\Omega\to\R$ can be superhedged by trading the underlying $S$ and holding options available at time 0. Specifically, let $\cH$ be the set of $\Delta=\{\Delta_t\}_{t=1}^{T-1}$ with $\Delta_{t}:\R^t_+\to \R$ Borel measurable for all $t=1,..., T-1$. Each $\Delta\in\cH$ represents a self-financing (dynamic) trading strategy. The resulting change of wealth over time along a path $x=(x_1,...,x_T)\in \Omega$ is given by
\[
(\Delta\cdot x)_t := \sum_{i=1}^{t-1}\Delta_i(x_1,...,x_i)\cdot(x_{i+1}-x_{i}),\quad \ \hbox{for}\ t=2,..., T.
\]
In addition, by writing $\mu=(\mu_1,...,\mu_T)$, we denote by $L^1(\mu)$ the set of $u=(u_1,...,u_T)$ where $u_{t}:\R_+\to\R$  is $\mu_t$-integrable for all $t=1,...,T$. Each $u\in L^1(\mu)$ represents a collection of options with different maturities. 
A semi-static superhedge of $\Phi$ consists of some $\Delta\in\cH$ and $u\in L^1(\mu)$ such that
\begin{equation}\label{superhedge}
\Psi_{u,\Delta}(x):=\sum_{t=1}^T  u_{t}(x_{t}) + (\Delta\cdot x)_T \ge \Phi(x),\quad  \hbox{for all}\ x=(x_1,...,x_T)\in\Omega.
\end{equation}
Such superhedging is {\it model-free}: the terminal wealth $\Psi_{u,\Delta}$ is required to dominate $\Phi$ on {\it every} path $x\in\Omega$, instead of $\P$-a.e. $x\in\Omega$ for some probability $\P$. This is distinct from the standard model-based approach: classically, one first specifies a model, or physical measure, $\P$ for the financial market, and then superhedges a contingent claim $\P$-a.s. With the {\it pointwise} relation \eqref{superhedge}, no matter which $\P$ materializes, $\Psi_{u,\Delta}\ge \Phi$ must hold $\P$-a.s.
There is then no need to specify a physical measure $\P$ a priori, which prevents any model misspecification. 

The corresponding model-free superhedging price of $\Phi$ is defined by
\begin{equation}\label{D}
D(\Phi):=\inf\left\{\mu(u): u\in L^1(\mu)\ \hbox{satisfies}\ \exists \Delta\in\cH\ \hbox{s.t.}\ \Psi_{u,\Delta}(x)\ge \Phi(x)\ \forall x\in\Omega\right\},
\end{equation}
where $\mu(u):=\sum_{t=1}^T\int_{\R_+}u_{t}d\mu_{t}$. To characterize $D(\Phi)$, the minimal cost to achieve \eqref{superhedge}, Beiglb\"{o}ck, Henry-Labord\'{e}re, and Penkner \cite{BHP13} introduce the martingale optimal transport problem
\begin{equation}\label{primal}
P(\Phi):=\sup_{\Q\in\cM}\E_\Q[\Phi]. 
\end{equation}
When $\Phi$ is upper semicontinuous, denoted by $\Phi\in\operatorname{USC}(\Omega)$, and grows linearly, $D(\Phi)$ coincides with $P(\Phi)$.

\begin{prop}[Corollary 1.1, \cite{BHP13}]\label{prop:duality usc}
Given $\Phi\in\operatorname{USC}(\Omega)$ for which there exists $K>0$ such that
\begin{equation}\label{Phi bounded below}
\Phi(x)\le K(1+x_1+\cdots+x_T)\quad \forall x=(x_1,\cdots,x_T)\in\Omega, 
\end{equation} 
we have $D(\Phi) = P(\Phi)$. 
\end{prop}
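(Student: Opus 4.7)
The proof splits into the easy weak duality $D(\Phi)\ge P(\Phi)$ and the substantive strong duality $D(\Phi)\le P(\Phi)$. For \textbf{weak duality}, fix $\Q\in\cM$ and a semi-static superhedge $(u,\Delta)$ of $\Phi$. The bound \eqref{Phi bounded below} with $m(\mu_t)<\infty$ ensures $\E_\Q[|\Phi|]<\infty$ and $\sum_t u_t(S_t)\in L^1(\Q)$, so the superhedging inequality gives $(\Delta\cdot S)_T^-\le|\Phi|+\sum_t|u_t(S_t)|\in L^1(\Q)$. The $\Q$-local martingale $(\Delta\cdot S)$ with integrable negative part is thus a $\Q$-supermartingale, yielding $\E_\Q[(\Delta\cdot S)_T]\le 0$; integrating $\Psi_{u,\Delta}\ge\Phi$ under $\Q$ then gives $\mu(u)\ge\E_\Q[\Phi]$, and the claim follows upon taking the appropriate infimum and supremum.

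For \textbf{strong duality}, I would first treat bounded continuous $\Phi$ by Hahn--Banach separation. Form the convex cone
\[
\cE:=\{\varphi\in C_b(\Omega):\exists (u,\Delta)\in L^1(\mu)\times\cH\ \text{with}\ \mu(u)\le 0\ \text{and}\ \Psi_{u,\Delta}\ge\varphi\}.
\]
For $\Phi_0\in C_b(\Omega)$ with $D(\Phi_0)>0$, one has $\Phi_0\notin\cE$, so Hahn--Banach produces a continuous linear functional $\ell$ on $C_b(\Omega)$ strictly separating $\Phi_0$ from $\cE$. Since $\cE$ contains all $\varphi\le 0$, the functional $\ell$ is positive, and after normalization it is represented by a probability measure $\Q^*$ on $\Omega$. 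The structure of $\cE$ then forces $\Q^*\in\cM$: the inclusion of all $\varphi(x)=u_t(x_t)$ with $\int u_t\,d\mu_t\le 0$ pins down $\Q^*\circ S_t^{-1}=\mu_t$ for each $t$, while $\pm(\Delta\cdot S)_T\in\cE$ for every bounded $\Delta\in\cH$ forces $S$ to be a $\Q^*$-martingale. Hence $P(\Phi_0)\ge\E_{\Q^*}[\Phi_0]>0$, and $P(\Phi)\ge D(\Phi)$ follows by translation.

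To extend to \textbf{USC $\Phi$ of linear growth}, subtract the trivially hedgeable function $K(1+\sum_t x_t)$ (whose $D$- and $P$-values both equal $K(1+\sum_t m(\mu_t))$, since $\Q\circ S_t^{-1}=\mu_t$ for all $\Q\in\Pi$) to reduce to $\Phi\le 0$ USC. Approximate $\Phi$ from above by a decreasing sequence $\Phi_n$ of bounded continuous functions (standard USC regularization via sup-convolution with Lipschitz kernels, possibly combined with truncation from below and a diagonal argument). The bounded-continuous case gives $D(\Phi_n)=P(\Phi_n)$; weak compactness of $\cM$ together with the Portmanteau property for USC integrands yields $P(\Phi_n)\downarrow P(\Phi)$, while monotonicity of $D$ gives $D(\Phi)\le D(\Phi_n)=P(\Phi_n)$ for every $n$, hence $D(\Phi)\le P(\Phi)$.

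The \textbf{main obstacle} is the Hahn--Banach step: because $\Omega=\R_+^T$ is non-compact, an element of $C_b(\Omega)^*$ can carry ``mass at infinity'' and need not be represented by an honest probability measure on $\Omega$. The finite moment conditions $m(\mu_t)<\infty$---and specifically the inclusion in $\cE$ of the \emph{unbounded} $u_t\in L^1(\mu_t)$ rather than just bounded ones---must be exploited to rule out such pathological separating functionals and to produce a bona fide element of $\cM$. Equivalently, one replaces $C_b(\Omega)$ with a suitable space of continuous functions of linear growth on which Kellerer's compactness of $\Pi$ delivers the needed tightness. This bridge between $C_b(\Omega)$-duality and $\Pi$-compactness is the technical core of the argument.
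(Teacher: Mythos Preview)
The paper does not give its own proof of this proposition; it is quoted as Corollary~1.1 of \cite{BHP13}. That reference proceeds \emph{not} by Hahn--Banach separation but by a two-step argument: (i) for each fixed bounded continuous $\Delta$, apply the classical Monge--Kantorovich duality of Kellerer \cite{Kellerer84} to the upper semicontinuous, linearly growing integrand $\Phi-(\Delta\cdot x)_T$, obtaining
\[
\inf\bigl\{\mu(u):\oplus u\ge \Phi-(\Delta\cdot x)_T\bigr\}=\sup_{\Q\in\Pi}\E_\Q\bigl[\Phi-(\Delta\cdot S)_T\bigr];
\]
(ii) take $\inf_\Delta$ on both sides and apply a minimax theorem to swap $\inf_\Delta$ and $\sup_{\Q\in\Pi}$, using weak compactness of $\Pi$ and upper semicontinuity of $\Q\mapsto\E_\Q[\Phi-(\Delta\cdot S)_T]$. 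The inner infimum is $-\infty$ unless $\Q\in\cM$, so the right-hand side collapses to $P(\Phi)$. This is precisely the ``minimax argument'' alluded to in the paragraph following Proposition~\ref{prop:duality usc}.

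Your plan takes a different route. Weak duality is fine, modulo one correction: \eqref{Phi bounded below} is only an \emph{upper} bound on $\Phi$, so $\E_\Q[|\Phi|]<\infty$ is not guaranteed; you must treat the case $\E_\Q[\Phi]=-\infty$ separately (where the inequality is trivial). Your USC approximation step is also correct. The genuine gap is your Step~2. You identify the obstacle accurately---a separating functional in $C_b(\Omega)^*$ may be purely finitely additive---but neither proposed fix closes it. Your cone $\cE$ is by construction a subset of $C_b(\Omega)$, so the separating functional is only ever evaluated on \emph{bounded} test functions; the unbounded $u_t\in L^1(\mu_t)$ never enter, and cannot be used to force $\sigma$-additivity. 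Passing to a weighted space of linear-growth functions can be made rigorous, but doing so amounts to re-deriving Kellerer's duality (extracting tightness of the dual element from the first-moment condition), which is exactly what the \cite{BHP13} route imports as a black box. The minimax approach is therefore more economical: it works with the compact set $\Pi$ from the outset and never confronts $C_b(\Omega)^*$ at all.
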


Model-free superhedging given marginals, pioneered by Hobson \cite{Hobson98}, has traditionally focused on specific forms of contingent claims; see e.g. \cite{BHR01}, \cite{HLW05-QF}, \cite{LW05}, \cite{CDDV08}, and \cite{CO11}. The main contribution of \cite{BHP13} is to allow for general, albeit upper semicontinuous, contingent claims, via the superhedging duality in Proposition~\ref{prop:duality usc}. In deriving this duality, \cite{BHP13} uses upper semicontinuity only once for a minimax argument. It is tempting to believe that upper semicontinuity is only a technical condition that can eventually be relaxed. 

This is, however, {\it not} the case. While the model-free duality given marginals in \cite{BHP13} has been widely studied and enriched by now (see \cite{DS14}, \cite{ABPS16}, \cite{FH16}, and \cite{CHO16}, among others), the requirement of upper semicontinuity stands still. Recently, Beiglb\"{o}ck, Nutz, and Touzi \cite{BNT17} has shown that, in fact, upper semicontinuity {\it cannot} be relaxed. They provide a counterexample where $\Phi$ is lower, but not upper, semicontinuous and the duality $D(\Phi)=P(\Phi)$ fails. 
To restore the duality, \cite{BNT17} modifies the definition of $D(\Phi)$ in \eqref{D} in a quasi-sure way: the inequality $\Psi_{u,\Delta}\ge \Phi$ is required to hold {\it not} pointwise, but $\M$-quasi surely; that is, $\Psi_{u,\Delta}\ge \Phi$ holds outside of a set that is $\P$-null for all $\P\in\M$. This quasi-sure modification successfully yields the duality $D_{\operatorname{qs}}(\Phi) = P(\Phi)$ for Borel measurable $\Phi$, where $D_{\operatorname{qs}}(\Phi)$ denotes the modified $D(\Phi)$ as described above. This is done in \cite{BNT17} for the two-period model (i.e. $T=2$), and in Nutz, Stebegg, and Tan \cite{NST19} for the multi-period case (i.e. $T\in \N$). 

In this paper, we approach the failure of $D(\Phi)=P(\Phi)$ from an opposite angle. We keep the definition of $D(\Phi)$ as in \eqref{D}, and investigate how $P(\Phi)$ should be modified to get a general duality for Borel measurable $\Phi$ and beyond. This has two motivations in terms of both theory and applications. 

From the theoretical point of view, the pointwise relation \eqref{superhedge} is inherited from the optimal transport theory: the dual problem in the Monge-Kantorovich duality is almost identical to $D(\Phi)$, except that it involves the simpler pointwise relation $\sum_{t=1}^T  u_{t}(x_{t}) \ge \Phi(x)$ (i.e. without the term $(\Delta\cdot x)_T$ in \eqref{superhedge}); see \cite{Kellerer84}. That is, $D(\Phi)$ naturally extends the classical dual problem from optimal transport to the more general setting we focus on. Finding the primal problem corresponding to this extended dual is of great theoretical interest in itself. 

More crucially, as $D(\Phi)$ represents precisely the minimal cost for model-free superhedging, if we modify its definition, although a duality can be obtained (as in \cite{BNT17} and \cite{NST19}), it will no longer adhere to the model-free superhedging context, thereby losing its financial relevance. In fact, there are two different applications here. In the context of optimal transport, $\Phi$ is a payoff function that assigns a reward to each transportation path $x=(x_1,...,x_T)\in \Omega$, and every $\Q\in\cM$ is an admissible 
transportation plan. The goal is to maximize reward from transportation, i.e. to attain $P(\Phi)$ in \eqref{primal}---the perspective taken by \cite{BNT17} and \cite{NST19}. Our goal, by contrast,  is to minimize the cost of model-free superhedging; 
all developments should then be centered around $D(\Phi)$ in \eqref{D}. 

Instead of dealing with $D(\Phi)$ directly, we impose, somewhat artificially, portfolio constraints. For any $N\in\N$, we consider
\begin{equation}\label{H^N}
\mathcal{H}^N :=\{\Delta\in\mathcal{H}:\  |\Delta_{t}| \le N,\ \forall t=1,\cdots,T-1\},
\end{equation} 
and define $D^N(\Phi)$ as in \eqref{D}, with $\cH$ therein replaced by $\cH^N$. That is, $D^N(\Phi)$ is a portfolio-constrained model-free superhedging price. Thanks to the general duality in Fahim and Huang \cite{FH16}, the corresponding primal problem $P^N(\Phi)$ can be identified, and there is no duality gap (i.e. $D^N(\Phi)=P^N(\Phi)$) when $\Phi$ is upper semicontinuous. The first major contribution of this paper, Theorem~\ref{thm:P^N=D^N}, shows that this portfolio-constrained duality actually holds generally for upper semi-analytic $\Phi$. Specifically, by treating $D^N$ and $P^N$ as functionals, we derive appropriate upward and downarrow continuity (Sections~\ref{subsec:P^N} and \ref{subsec:D^N}). Choquet's capacitability theorem can then be invoked to extend $D^N(\Phi)=P^N(\Phi)$ from upper semicontinuous $\Phi$ to upper semi-analytic ones. 

Note that the portfolio bound $N\in\N$ is indispensable here. In the technical result Lemma~\ref{lem:compactness for D}, the compactness of the space of semi-static strategies $(u,\Delta)\in L^1(\mu)\times\cH^N$ is extracted from the bound $N\in\N$, under an appropriate weak topology. Such compactness then gives rise to the upward continuity of $D^N$; see Proposition~\ref{prop:D^N upward}. As opposed to this, $D$ in \eqref{D}, when viewed as a functional, does not possess the desired upward continuity. This prevents a direct application of Choquet's capacitability theorem to the unconstrained duality $D(\Phi)=P(\Phi)$ in Proposition~\ref{prop:duality usc}; see Remark~\ref{rem:cannot apply} for details.

By taking $N\to\infty$ in the constrained duality $D^N(\Phi)=P^N(\Phi)$, we obtain a new characterization of $D(\Phi)$, for upper semi-analytic $\Phi$; see Theorem~\ref{thm:main}, the main result of this paper. This new characterization asserts a generalized version of risk-neutral pricing. To find the model-free superhedging price $D(\Phi)$, we need to compute expected values of $\Phi$, but {\it not} directly under risk-neutral (martingale) measures $\Q\in\cM$. As prescribed by Theorem~\ref{thm:main}, we should consider sequences of measures $\{\Q_n\}_{n\in\N}$ that converge to $\cM$ appropriately, and compute the limiting expected values, i.e. $\limsup_{n\to\infty}\E_{\Q_n}[\Phi]$. The supremum of these limiting expected values then characterizes $D(\Phi)$. For the special case where $\Phi$ is upper semicontinuous, these limiting expected values can be attained by measures $\Q\in\cM$, as shown in Proposition~\ref{prop:tP=P}. The generalized duality in Theorem~\ref{thm:main} thus reduces to one that involves solely measures in $\cM$, recovering the classical duality in Proposition~\ref{prop:duality usc}. 

In deriving the generalized duality in Theorem~\ref{thm:main} from the constrained one $D^N(\Phi)=P^N(\Phi)$, one needs the relation $\lim_{N\to\infty} D^N(\Phi) = D(\Phi)$. This is equivalent to $ D^\infty(\Phi) = D(\Phi)$, where $D^\infty(\Phi)$ is defined as in \eqref{D}, with $\cH$ therein replaced by
\begin{equation}\label{H^infty}
\cH^\infty := \{\Delta\in\mathcal{H}:\  \Delta_{t}\ \hbox{is bound},\ \forall t=1,\cdots,T-1\}.
\end{equation}
This turns out to be highly nontrivial, and is established through delicate probabilistic estimations; see Proposition~\ref{thm: sec.main} for details. Such a relation is economically intriguing in itself: it states that restricting to bounded trading strategies does not increase the cost of model-free superhedging. To the best of our knowledge, this harmless restriction to bounded strategies has not been identified in the literature under such generality. 

The rest of the paper is organized as follows. Section~\ref{sec:setup} introduces the main result of this paper, a generalized duality that characterizes $D(\Phi)$, for upper semi-analytic $\Phi$. 
Section~\ref{sec:P^N=D^N} establishes a portfolio-constrained duality for upper semi-analytic contingent claims, by using Choquet's capacity theory. Section~\ref{pf.thm.main} derives an unconstrained duality for upper semi-analytic contingent claims, as the limiting case of the constrained one in Section~\ref{sec:P^N=D^N}; this completes the proof of the main result.

\subsection{Notation}\label{subsec:notation} 
Let $Y=\R^t_+$ for some $t=1,2,...,T$. 
We denote by $\cG(Y)$ the set of all functions from $\Omega$ to $\R$. Moreover, let $\USA(Y)$, $\cB(Y)$, and $\USC(Y)$ be the sets of functions in $\cG(Y)$ that are upper semi-analytic, Borel measurable, and upper semicontinuous, respectively. 
Throughout this paper, for any $\Phi\in\cG(\Omega)$ and $\Q\in\Pi$, we will interpret $\E_\Q[\Phi]$ as the {\it outer expectation} of $\Phi$. 
When $\Phi$ is actually Borel measurable, it reduces to the standard expectation of $\Phi$. 

For any $u\in L^1(\mu)$, we will write $\oplus u(x) := \sum_{t=1}^T u_{t}(x_{t})$ for $x=(x_1,...,x_T)\in\Omega$ and $\mu(u) :=\sum_{t=1}^T\int_{\R_+}u_{t}d\mu_{t}$, as specified below \eqref{D}.



\section{The Main Result}\label{sec:setup}
\subsection{Preliminaries}
Given $N\in\N$, recall $\cH^N$ defined in \eqref{H^N}. For each $\Q\in\Pi$, we introduce
\begin{align}\label{def: A_t}
\text{\bf A}^{N}_T(\Q):=\sup_{\Delta\in\cH^N}\E_\Q[(\Delta\cdot S)_T]=\sup_{\Delta\in\cH^N_c}\E_\Q[(\Delta\cdot S)_T],
\end{align}
where 
\begin{equation*}
\mathcal{H}^N_c :=\{\Delta\in\mathcal{H}^N:\  \Delta_{t}\ \hbox{is continuous},\ \forall t=1,\cdots,T-1\}.
\end{equation*}
Note that the reduction to continuous trading strategies in \eqref{def: A_t} is justified by \cite[Lemma 3.3]{FH16}. The set $\cM$ in \eqref{defn:cM} can be fully characterized by $\text{\bf A}^{N}_T(\Q)$ as follows. 

\begin{lem}\label{prop:cM equivalence}
$\Q\in\cM$ $\iff$ $\text{\bf A}^{1}_T(\Q)=0$ $\iff$ $\bA^N_T(\Q)=0$ for all $N\in\N$.
\end{lem}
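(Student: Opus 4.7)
The chain of implications ``$\Q\in\cM$'' $\Rightarrow$ ``$\bA^N_T(\Q)=0$ for all $N\in\N$'' $\Rightarrow$ ``$\bA^1_T(\Q)=0$'' reduces, given that the middle-to-right step is immediate, to proving the two nontrivial directions ``$\Q\in\cM \Rightarrow \bA^N_T(\Q)=0$ for all $N$'' and ``$\bA^1_T(\Q)=0 \Rightarrow \Q\in\cM$''. Both are instances of the classical equivalence between the martingale property and the vanishing of expected stochastic integrals against bounded integrands; the required integrability is automatic from $\Q\in\Pi$, since the marginal condition \eqref{finite moment} gives $\E_\Q[S_t]=m(\mu_t)<\infty$ for each $t$.

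For ``$\Q\in\cM \Rightarrow \bA^N_T(\Q)=0$'': Fix any $N\in\N$ and $\Delta\in\cH^N$. Each $\Delta_i(S_1,\dots,S_i)$ is bounded and $\sigma(S_1,\dots,S_i)$-measurable, so I would condition on $\sigma(S_1,\dots,S_i)$ inside each summand and use the martingale identity $\E_\Q[S_{i+1}-S_i\mid S_1,\dots,S_i]=0$ to conclude
\[
\E_\Q[(\Delta\cdot S)_T]=\sum_{i=1}^{T-1}\E_\Q\bigl[\Delta_i(S_1,\dots,S_i)\,\E_\Q[S_{i+1}-S_i\mid S_1,\dots,S_i]\bigr]=0.
\]
Since this holds for every $\Delta\in\cH^N$, the supremum equals $0$.

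For ``$\bA^1_T(\Q)=0 \Rightarrow \Q\in\cM$'': Fix $i\in\{1,\dots,T-1\}$ and any Borel $f:\R_+^i\to\R$ with $|f|\le 1$. Embed $f$ into $\cH^1$ by setting $\Delta_i:=f$ and $\Delta_j:=0$ for $j\ne i$; then $\bA^1_T(\Q)=0$ forces $\E_\Q[f(S_1,\dots,S_i)(S_{i+1}-S_i)]\le 0$. Running the same argument with $-f\in\cH^1$ yields the reverse inequality, hence equality. Scaling extends this identity to every bounded Borel $f$, which, combined with the $\Q$-integrability of $S_{i+1}-S_i$ and the definition of conditional expectation with respect to $\sigma(S_1,\dots,S_i)$, is exactly $\E_\Q[S_{i+1}-S_i\mid S_1,\dots,S_i]=0$ $\Q$-a.s. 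Doing this for each $i$ shows $S$ is a $\Q$-martingale for its natural filtration, and since $\Q\in\Pi$, we get $\Q\in\cM$.

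There is no serious obstacle here; the only point requiring care is the passage from ``vanishing expectation against all bounded Borel test functions $f$ of $(S_1,\dots,S_i)$'' to ``vanishing conditional expectation given $\sigma(S_1,\dots,S_i)$,'' handled by the scaling step above together with the standard characterization of conditional expectation. All the topological and analytic subtleties that drive the rest of the paper are absent from this lemma.
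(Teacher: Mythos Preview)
Your proof is correct and self-contained; the underlying idea---that the martingale property is equivalent to the vanishing of expected stochastic integrals against bounded predictable integrands---is the same as the paper's. The paper, however, packages the argument more economically in two ways. First, it observes the scaling identity $\bA^N_T(\Q)=N\,\bA^1_T(\Q)$, which immediately dispatches the equivalence ``$\bA^1_T(\Q)=0 \iff \bA^N_T(\Q)=0$ for all $N$'' without repeating the martingale computation for each $N$. Second, rather than spelling out the conditioning argument and the test-function step as you do, the paper invokes the reduction to continuous strategies in \eqref{def: A_t} and then cites \cite[Lemma~2.3]{BHP13} for the characterization ``$\Q\in\cM \iff \E_\Q[(\Delta\cdot S)_T]=0$ for all bounded continuous $\Delta$''. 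Your approach buys self-containment and avoids the external reference; the paper's buys brevity.
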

\begin{proof}
By definition, $\bA^N_T(\Q) = N \bA^1_T(\Q)$. Thus, $\bA^1_T(\Q) =0$ if and only if $\bA^N_T(\Q) =0$ for all $N\in\N$. Now, by \eqref{def: A_t}, ``$\bA^N_T(\Q)=0$ for all $N\in\N$'' is equivalent to ``$\E_\Q[(\Delta\cdot S)_T]=0$ for all $\Delta\in\cH^N_c$, for any $N\in\N$''. The latter condition holds if and only if $\Q\in\cM$, by \cite[Lemma 2.3]{BHP13}. 
\end{proof}

Lemma~\ref{prop:cM equivalence} indicates that a pseudometric on $\Pi$ can be defined by
\begin{equation}\label{d}
d(\Q_1,\Q_2) := \left|\bA^1_T(\Q_1)-\bA^1_T(\Q_2)\right|,\quad \forall \Q_1,\Q_2\in\Pi. 
\end{equation}
It is only a pseudometric, but not a metric, because $d(\Q_1,\Q_2)=0$ does not necessarily imply $\Q_1=\Q_2$. We can turn it into a metric by considering equivalent classes induced by $d$. Specifically, we say $\Q_1,\Q_2\in\Pi$ are equivalent (denoted by $\Q_1\sim\Q_2$) if $d(\Q_1,\Q_2)=0$, or $\bA^1_T(\Q_1) = \bA^1_T(\Q_2)$. Equivalent classes are then defined by $[\Q] := \{\Q'\in\Pi : d(\Q',\Q)=0\}$ for all $\Q\in \Pi$. 
On the quotient space $\Pi^*:=\Pi/\sim\ = \{[\Q]:\Q\in\Pi\}$,
\begin{equation}\label{rho}
\rho([\Q_1],[\Q_2]) := d(\Q_1,\Q_2)
\end{equation}
defines a metric. 

\begin{rem}\label{rem:cM=[Q]}
In view of Lemma~\ref{prop:cM equivalence}, $\cM=[\Q]$ for any $\Q\in\cM$. 
\end{rem}

\begin{rem}
Instead of the pseudometric on $\Pi$ in \eqref{d}, one can consider the semi-norm 
\begin{align*}
\|Q\| := \sup_{\Delta \in \mathcal H^1 } \int_{\Omega} (\Delta\cdot S)_T\ dQ
\end{align*}
defined on the vector space $\mathcal K := \{ Q : Q \ \text{is a signed measure on $\Omega$}\}$. When we restrict the semi-norm to $\Pi\subset \mathcal K$, we have $\|\Q\|=0$ if and only $\Q\in\cM$ (thanks to Lemma~\ref{prop:cM equivalence}). This can be used to define a metric equivalent to \eqref{rho}.
\end{rem}

To state the main result of this paper, Theorem~\ref{thm:main} below, we need to consider a sequence $\{\Q_N\}_{N\in\N}$ in $\Pi$ that converge to $\cM$ under the metric $\rho$; that is, by Remark~\ref{rem:cM=[Q]}, 
\[
\rho([\Q_N],\cM) = \rho([\Q_N],[\Q]) \to 0\quad \hbox{as}\ N\to\infty,\quad \forall \Q\in \cM. 
\]
For simplicity, this will be denoted by $\Q_N\overset{\rho}{\to}\cM$. As $\Q_N\overset{\rho}{\to}\cM$ is equivalent to $\bA^1_T(\Q_N)\to 0$, by \eqref{rho} and \eqref{d}, they will be used interchangeably throughout the paper.  

Crucially, $\Q_N\overset{\rho}{\to}\cM$ entails weak convergence to $\cM$ (up to a subsequence). 

\begin{lem}\label{lem:weak convergence to cM}
Consider $\{\Q_N\}_{N\in\N}$ in $\Pi$ such that $\Q_N\overset{\rho}{\to}\cM$. For any subsequence $\{\Q_{N_k}\}_{k\in\N}$ that converges weakly, it must converge weakly to some $\Q_*\in\cM$.   
\end{lem}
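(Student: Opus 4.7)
}
The plan is to upgrade the convergence $\bA^1_T(\Q_N)\to 0$ into the martingale property of any weak-limit $\Q_*$ by testing against bounded continuous trading strategies and using the fixed marginals to handle the fact that $(\Delta\cdot S)_T$ is continuous but unbounded.

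First, let $\{\Q_{N_k}\}$ be any weakly convergent subsequence, with weak limit $\Q_*\in\fP(\Omega)$. Since $\Pi$ is defined by pinning down each marginal and each coordinate projection $S_t$ is continuous, $\Pi$ is closed under weak convergence, so $\Q_*\in\Pi$. By Lemma~\ref{prop:cM equivalence}, it then suffices to show $\bA^1_T(\Q_*)=0$, i.e.\ that $\E_{\Q_*}[(\Delta\cdot S)_T]\le 0$ for every $\Delta\in\cH^1_c$ (the opposite direction follows by replacing $\Delta$ with $-\Delta$, which also lies in $\cH^1_c$).

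Fix $\Delta\in\cH^1_c$. The map $x\mapsto (\Delta\cdot x)_T$ is continuous on $\Omega$, and because $|\Delta_i|\le 1$, it satisfies the domination $|(\Delta\cdot x)_T|\le 2\sum_{t=1}^T x_t$. The hard part is to pass to the limit in $\E_{\Q_{N_k}}[(\Delta\cdot S)_T]$ despite this unbounded test function. The key observation is that all $\Q_{N_k}$ share the same marginals $\mu_1,\dots,\mu_T$, so for every $K>0$,
\begin{equation*}
\E_{\Q_{N_k}}\!\left[\textstyle\sum_{t=1}^T S_t\,\mathbf{1}_{\{\sum_t S_t>K\}}\right]\;\le\;\sum_{t=1}^T\int_{\{y>K/T\}} y\,d\mu_t(y),
\end{equation*}
which tends to $0$ as $K\to\infty$ by \eqref{finite moment}. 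Hence $\{\sum_t S_t\}$ is uniformly integrable under $\{\Q_{N_k}\}$, and therefore so is $\{(\Delta\cdot S)_T\}$. Combined with continuity of $(\Delta\cdot\,)_T$ and weak convergence $\Q_{N_k}\Rightarrow\Q_*$, a standard approximation (truncate at level $K$, apply portmanteau, then let $K\to\infty$) yields
\begin{equation*}
\E_{\Q_{N_k}}[(\Delta\cdot S)_T]\;\longrightarrow\;\E_{\Q_*}[(\Delta\cdot S)_T].
\end{equation*}

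Finally, by definition of $\bA^1_T$ in \eqref{def: A_t}, $\E_{\Q_{N_k}}[(\Delta\cdot S)_T]\le \bA^1_T(\Q_{N_k})$, and the hypothesis $\Q_N\overset{\rho}{\to}\cM$ combined with \eqref{rho}--\eqref{d} gives $\bA^1_T(\Q_{N_k})\to 0$. Hence $\E_{\Q_*}[(\Delta\cdot S)_T]\le 0$. Applying the same bound to $-\Delta\in\cH^1_c$ yields the reverse inequality, so $\bA^1_T(\Q_*)=0$ and Lemma~\ref{prop:cM equivalence} gives $\Q_*\in\cM$. The main obstacle throughout is the lack of boundedness of $(\Delta\cdot S)_T$; all other ingredients (compactness of $\Pi$, closedness under weak convergence, the characterization in Lemma~\ref{prop:cM equivalence}) are already in place.
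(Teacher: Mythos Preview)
Your overall approach coincides with the paper's: both argue that for each fixed $\Delta\in\cH^1_c$ the map $\Q\mapsto\E_\Q[(\Delta\cdot S)_T]$ is continuous along weakly convergent sequences in $\Pi$, and then conclude $\bA^1_T(\Q_*)\le\lim_k\bA^1_T(\Q_{N_k})=0$. The paper packages this continuity as a direct citation of \cite[Lemma 4.3]{Villani-book-09} (using that $\E_\Q[\sum_t S_t]=\sum_t m(\mu_t)$ is constant on $\Pi$), whereas you unwind it by hand via uniform integrability; the logical content is the same.

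There is, however, a genuine slip in your displayed estimate
\[
\E_{\Q_{N_k}}\!\Big[\textstyle\sum_t S_t\,\mathbf{1}_{\{\sum_t S_t>K\}}\Big]\;\le\;\sum_t\int_{\{y>K/T\}} y\,d\mu_t(y),
\]
which is false: with $T=2$ and $(S_1,S_2)=(0.9K,0.2K)$ deterministically, the left side equals $1.1K$ while the right side equals $0.9K$. The event $\{\sum_t S_t>K\}$ is contained in $\bigcup_s\{S_s>K/T\}$, but this only gives $\E_\Q[S_t\mathbf{1}_{\{\sum_s S_s>K\}}]\le\sum_s\E_\Q[S_t\mathbf{1}_{\{S_s>K/T\}}]$, and the cross terms $s\neq t$ depend on the joint law. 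Uniform integrability still holds uniformly over $\Pi$, but you need one more step: for $s\neq t$ write $S_t\mathbf{1}_{\{S_s>M\}}\le S_t\mathbf{1}_{\{S_t>L\}}+L\mathbf{1}_{\{S_s>M\}}$ and bound each piece by marginal quantities, choosing first $L$ and then $M$. Alternatively, bypass the issue entirely as the paper does, by observing that $\E_\Q[\sum_t S_t]$ is constant on $\Pi$ and invoking \cite[Lemma 4.3]{Villani-book-09} directly.
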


\begin{proof}
Let $\Q_*\in\Pi$ denote the probability measure to which $\Q_{N_k}$ converges weakly. First, recall that  $\Q_N\overset{\rho}{\to}\cM$ is equivalent to $\bA^1_T(\Q_N)\to 0$, which in turn implies $\bA^1_T(\Q_{N_k})\to 0$. Next, for any $\Delta\in\cH^1_c$, since $|(\Delta\cdot x)_T|\le h(x):=x_1+2(x_2+...+x_{T-1})+x_T$, we deduce from  \cite[Lemma 4.3]{Villani-book-09} that $\Q\mapsto \E_\Q[(\Delta\cdot S)_T]$ is continuous under the topology of weak convergence. It follows that 
\[
\Q\mapsto \bA^{1}_T(\Q)= \sup_{\Delta\in\cH^1_c}\E_\Q[(\Delta\cdot S)_T]
\]
is lower semicontinuous under the topology of weak convergence. Hence,
\[
\bA^1_T(\Q_*) \le \liminf_{k\to\infty} \bA^1_T(\Q_{N_k}) = 0.
\]   
We then conclude $\bA^1_T(\Q_*)=0$, which implies $\Q_*\in\cM$ thanks to Lemma~\ref{prop:cM equivalence}. 
\end{proof}


\subsection{The Generalized Duality}
Now, we are ready to present the main result of this paper. 

\begin{thm}\label{thm:main}
For any $\Phi\in\USA(\Omega)$ for which there exists $K>0$ such that 
\begin{equation}\label{Phi bounded}
|\Phi(x)|\le K(1+x_1+\cdots+x_T)\quad \forall x=(x_1,\cdots,x_T)\in\Omega,
\end{equation}
we have
\begin{equation}\label{general duality}
D(\Phi) = \widetilde P(\Phi):=\sup_{}\left\{\limsup_{N\to\infty} \E_{\Q_N}[\Phi] :\Q_N\overset{\rho}{\to}\cM \right\}.
\end{equation} 
\end{thm}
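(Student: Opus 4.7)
The plan is to sandwich $D(\Phi)$ and $\widetilde P(\Phi)$ between the portfolio-constrained quantities $D^N(\Phi)=P^N(\Phi)$ and pass to the limit $N\to\infty$, using the two main inputs already advertised in the introduction: the constrained duality (Theorem~\ref{thm:P^N=D^N}) and the bounded-strategy reduction $D(\Phi)=D^\infty(\Phi)$ (Proposition~\ref{thm: sec.main}). The argument splits into the two obvious inequalities.

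\emph{Step 1: $\widetilde P(\Phi)\le D(\Phi)$.} Replacing $D$ by $D^\infty$ via Proposition~\ref{thm: sec.main}, fix any superhedge $(u,\Delta)\in L^1(\mu)\times\cH^\infty$ of $\Phi$, with $|\Delta_t|\le M$ for all $t$. For an arbitrary sequence $\Q_N\overset{\rho}{\to}\cM$, integrating $\Psi_{u,\Delta}\ge \Phi$ under $\Q_N$ gives $\E_{\Q_N}[\Phi]\le \mu(u)+\E_{\Q_N}[(\Delta\cdot S)_T]$. By the definition of $\bA^{M}_T$ in \eqref{def: A_t} and Lemma~\ref{prop:cM equivalence},
\[
|\E_{\Q_N}[(\Delta\cdot S)_T]|\le \bA^{M}_T(\Q_N)= M\,\bA^{1}_T(\Q_N)\longrightarrow0,
\]
since $\Q_N\overset{\rho}{\to}\cM$ is equivalent to $\bA^1_T(\Q_N)\to 0$. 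Hence $\limsup_N\E_{\Q_N}[\Phi]\le\mu(u)$; taking the supremum over sequences and the infimum over $(u,\Delta)$ yields the inequality. This is precisely the step that breaks without boundedness of $\Delta$: for unbounded $\Delta$, there is no uniform control of $\E_{\Q_N}[(\Delta\cdot S)_T]$ from $\bA^1_T(\Q_N)\to0$ alone.

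\emph{Step 2: $D(\Phi)\le\widetilde P(\Phi)$.} Using Theorem~\ref{thm:P^N=D^N} together with the Lagrangian identification (inherited from~\cite{FH16})
\[
P^N(\Phi)=\sup_{\Q\in\Pi}\bigl\{\E_\Q[\Phi]-N\bA^{1}_T(\Q)\bigr\},
\]
and the monotone convergence $D(\Phi)=D^\infty(\Phi)=\lim_N D^N(\Phi)$ from Proposition~\ref{thm: sec.main}, we get $D(\Phi)=\lim_N P^N(\Phi)$. For each $N$, pick $\Q_N\in\Pi$ with
\[
\E_{\Q_N}[\Phi]-N\bA^{1}_T(\Q_N)\ge P^N(\Phi)-1/N.
\]
The linear growth \eqref{Phi bounded} and the fixed marginals of $\Q_N\in\Pi$ make $\E_{\Q_N}[\Phi]$ uniformly bounded, and $P^N(\Phi)\ge P(\Phi)>-\infty$; rearranging shows $N\bA^{1}_T(\Q_N)$ is bounded, so $\bA^{1}_T(\Q_N)=O(1/N)\to0$, i.e.\ $\Q_N\overset{\rho}{\to}\cM$. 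Therefore
\[
\widetilde P(\Phi)\ge\limsup_N\E_{\Q_N}[\Phi]\ge\limsup_N\bigl(P^N(\Phi)-1/N+N\bA^{1}_T(\Q_N)\bigr)\ge\lim_N P^N(\Phi)=D(\Phi).
\]

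\emph{Where the difficulty lies.} The genuine work has been isolated into the two preceding results. Proposition~\ref{thm: sec.main}, the harmless reduction to bounded strategies, is indispensable for Step~1 (as noted above, without it the Lagrange-type correction term is uncontrollable), while Theorem~\ref{thm:P^N=D^N}, established via Choquet's capacitability theorem, powers Step~2. Granting both, the present argument is a clean two-step limiting procedure whose only real technicality is confirming that near-optimal $\Q_N$ for $P^N$ automatically $\rho$-converge to $\cM$---a quick consequence of the linear growth of $\Phi$ and the uniform bound it imposes on $\E_{\Q_N}[\Phi]$ across $\Q_N\in\Pi$.
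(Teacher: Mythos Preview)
Your proof is correct and follows essentially the same approach as the paper: the paper's argument also combines Proposition~\ref{thm: sec.main} with the limiting procedure based on Theorem~\ref{thm:P^N=D^N}, establishing first the intermediate identity $D^\infty(\Phi)=\widetilde P(\Phi)$ (Proposition~\ref{prop: D.infty}) via exactly your Step~2 construction of near-optimal $\Q_N$. Your Step~1 is in fact slightly cleaner than the paper's version, which introduces an auxiliary sequence $h(N)\to\infty$ with $h(N)\bA^1_T(\Q_N)\to0$ and works through $D^{h(N)}$, whereas you simply fix a single bound $M$ on the superhedging strategy.
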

When $\Phi$ is additionally upper semicontinuous, Theorem~\ref{thm:main} recovers the classical duality in Proposition~\ref{prop:duality usc}, as the next result demonstrates.

\begin{prop}\label{prop:tP=P}
For any $\Phi\in\operatorname{USC}(\Omega)$ that satisfies \eqref{Phi bounded}, $\widetilde P(\Phi)$ reduces to $P(\Phi)$ in \eqref{primal}. 
\end{prop}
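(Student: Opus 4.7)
The plan is to verify the two inequalities $\widetilde P(\Phi)\ge P(\Phi)$ and $\widetilde P(\Phi)\le P(\Phi)$ separately.

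For the easy direction $\widetilde P(\Phi)\ge P(\Phi)$, I would take an arbitrary $\Q\in\cM$ and use the constant sequence $\Q_N\equiv \Q$. Lemma~\ref{prop:cM equivalence} gives $\bA^1_T(\Q)=0$, so $\Q_N\overset{\rho}{\to}\cM$ trivially; therefore $\E_\Q[\Phi]=\limsup_{N\to\infty}\E_{\Q_N}[\Phi]\le \widetilde P(\Phi)$. Taking the supremum over $\Q\in\cM$ yields $P(\Phi)\le \widetilde P(\Phi)$.

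For the reverse inequality, I would fix any sequence $\{\Q_N\}\subset\Pi$ with $\Q_N\overset{\rho}{\to}\cM$ and aim to show $\limsup_{N\to\infty}\E_{\Q_N}[\Phi]\le P(\Phi)$. First I would extract a subsequence along which $\E_{\Q_N}[\Phi]$ converges to this upper limit; then, by the weak compactness of $\Pi$ (\cite[Proposition~1.2]{Kellerer84}), I would extract a further subsequence, still denoted $\{\Q_N\}$, converging weakly to some $\Q_*\in\Pi$. Since $\Q_N\overset{\rho}{\to}\cM$ is preserved along subsequences, Lemma~\ref{lem:weak convergence to cM} forces $\Q_*\in\cM$, so $\E_{\Q_*}[\Phi]\le P(\Phi)$. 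It then remains to show $\lim_{N\to\infty}\E_{\Q_N}[\Phi]\le \E_{\Q_*}[\Phi]$ along this extracted subsequence.

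The main obstacle is that the Portmanteau-type bound $\limsup_N \E_{\Q_N}[f]\le \E_{\Q_*}[f]$ for upper semicontinuous $f$ holds only when $f$ is bounded above, whereas $\Phi$ only enjoys the linear growth \eqref{Phi bounded}. I would circumvent this via truncation: set $h(x):=K(1+x_1+\cdots+x_T)$ and $\Phi_M:=\Phi\wedge M\in\USC(\Omega)$, which is bounded above. The classical Portmanteau bound yields $\limsup_N \E_{\Q_N}[\Phi_M]\le \E_{\Q_*}[\Phi_M]$, and the truncation error obeys $\E_{\Q_N}[\Phi]-\E_{\Q_N}[\Phi_M]\le \E_{\Q_N}[(h-M)^+]$. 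The crucial observation is that, using the elementary bound $(\sum_{i=0}^T a_i-c)^+\le \sum_{i=0}^T (a_i-c/(T+1))^+$ for nonnegative $a_i$, the quantity $(h-M)^+$ can be dominated by a finite sum of one-dimensional terms of the form $(S_t-\theta_M)^+$ with $\theta_M\to\infty$ as $M\to\infty$. Since each $\Q_N\in\Pi$ has the fixed marginals $\mu_t$, the resulting integrals $\int_{\R_+}(y-\theta_M)^+\,d\mu_t(y)$ are independent of $N$ and vanish as $M\to\infty$ by \eqref{finite moment}. Combining this with $\E_{\Q_*}[\Phi_M]\to \E_{\Q_*}[\Phi]$ via dominated convergence (using $|\Phi|\le h$ and $\E_{\Q_*}[h]<\infty$) concludes $\limsup_N \E_{\Q_N}[\Phi]\le \E_{\Q_*}[\Phi]\le P(\Phi)$, finishing the proof.
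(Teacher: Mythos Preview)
Your proof is correct and follows the same route as the paper: both directions are handled identically, and for the hard inequality you use the same subsequence-plus-compactness argument together with Lemma~\ref{lem:weak convergence to cM} to land on some $\Q_*\in\cM$. The only difference is in the final step $\limsup_N\E_{\Q_N}[\Phi]\le \E_{\Q_*}[\Phi]$: the paper invokes \cite[Lemma~4.3]{Villani-book-09} directly, whereas you unfold that lemma by hand via the truncation $\Phi_M=\Phi\wedge M$ and a uniform tail bound exploiting the shared marginals of $\Pi$---a self-contained but equivalent argument.
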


\begin{proof}
For any $\Q\in\cM$, by taking $\Q_N := \Q$ for all $N\in\N$, the definition of $\widetilde P(\Phi)$ in \eqref{general duality} directly implies $\widetilde P(\Phi)\ge \E_\Q[\Phi]$. Taking supremum over $\Q\in\cM$ yields $\widetilde P(\Phi)\ge P(\Phi)$. 

On the other hand, take an arbitrary $\{\Q_N\}_{N\in\N}$ in $\Pi$ such that $\Q_N\overset{\rho}{\to}\cM$. For any $\eps>0$, there exists a subsequence $\{\Q_{N_k}\}_{k\in\N}$ such that 
\begin{equation}\label{11'}
\lim_{k\to\infty} \E_{\Q_{N_k}}[\Phi] \ge \limsup_{N\to\infty} \E_{\Q_{N}}[\Phi]-\eps.
\end{equation}
As $\Pi$ is compact (recall the explanation below \eqref{defn:Pi}), there is a further subsequence, which will still be denoted by $\{\Q_{N_k}\}_{k\in\N}$, that converges weakly to some $\Q_*\in\Pi$. By Lemma~\ref{lem:weak convergence to cM}, $\Q_*$ must belong to $\cM$. Now, as $\Phi$ is upper semicontinuous and satisfies \eqref{Phi bounded}, we deduce from \cite[Lemma 4.3]{Villani-book-09} and $\{\Q_{N_k}\}$ converging weakly to $\Q_*\in\cM$ that 
\[
\lim_{k\to\infty} \E_{\Q_{N_k}}[\Phi] \le \E_{\Q_*}[\Phi]\le P(\Phi). 
\]
This, together with \eqref{11'} and the arbitrariness of $\eps>0$, shows that $\limsup_{N\to\infty} \E_{\Q_{N}}[\Phi]\le P(\Phi)$. As $\{\Q_N\}_{N\in\N}$ such that $\Q_N\overset{\rho}{\to}\cM$ is arbitrarily chosen, we conclude that $\widetilde P(\Phi)\le P(\Phi)$.
\end{proof}

Theorem~\ref{thm:main} extends the standard wisdom for risk-neutral pricing. To find the model-free superhedging price $D(\Phi)$, one needs to compute expected values of $\Phi$, but not directly under risk-neutral (martingale) measures $\Q\in\cM$. Instead, one should consider, more generally, sequences of measures $\{\Q_N\}_{N\in\N}$ in $\Pi$ that converge appropriately to $\cM$, and compute the limiting expected values of $\Phi$. Only when $\Phi$ is continuous enough (i.e. upper semicontinuous) can we restrict our attention to solely martingale measures in $\cM$, as Proposition~\ref{prop:tP=P} indicates. 

The next example demonstrates explicitly that despite $D(\Phi)>P(\Phi)$, the generalized duality $D(\Phi) = \widetilde P(\Phi)$ holds.  

\begin{eg}\label{eg:eg 8.1 from BNT}
Let $T=2$ and $\mu_1 = \mu_2$ be the Lebesgue measure on $[0,1]$. Then $\cM$ contains one single measure $\P_0$, under which $(S_1,S_2)$ is uniformly distributed on $\{(x,y)\in[0,1]^2: x=y\}$. For the lower semicontinuous $\Phi(x_1,x_2):=1_{\{x_1\neq x_2\}}$, it is shown in \cite[Example 8.1]{BNT17} that $0=P(\Phi)<D(\Phi)=1$;
in addition, $(u^*_1,u^*_2,\Delta^*_1)\equiv (1,0,0)$ is an optimizer of $D(\Phi)$.

We will show that $\widetilde P(\Phi) =1$. 
Consider a collection of probability measures $\{\Q_M\}_{M\in\N}$ on $[0,1]^2$, with the density function of each $\Q_M$ given by
\begin{equation}\label{density}
g(x_1,x_2)=M\sum_{i=0}^{M-1}1_{[\frac{i}{M},\frac{i+1}{M})^2}(x_1,x_2);
\end{equation}
see Figure~\ref{fig:payoff}. It can be checked by definition that $\Q_{M}\in\Pi$.  
\begin{figure}[ht!]
\centering
 \begin{tikzpicture}
\draw [-, thick] (0,0) -- (0,4) -- (4,4) -- (4,0) -- (0,0);
\draw [-, blue,thick, fill] (0,0) -- (0,.5) --(.5,.5) -- (.5,0) --(0,0);
\draw [-, blue,thick, fill] (.5,.5) -- (.5,1) --(1,1) -- (1,.5) --(.5,.5);
\draw [-, blue,thick, fill] (1,1) -- (1,1.5) --(1.5,1.5) -- (1.5,1) --(1,1);
\draw [-, blue,thick, fill] (1.5,1.5) -- (1.5,2) --(2,2) -- (2,1.5) --(1.5,1.5);
\draw [-, blue,thick, fill] (2,2) -- (2,2.5) --(2.5,2.5) -- (2.5,2) --(2,2);
\draw [-, blue,thick, fill] (2.5,2.5) -- (2.5,3) --(3,3) -- (3,2.5) --(2.5,2.5);
\draw [dotted, blue,thick, fill] (3,3) --(3.5,3.5);
\draw [-, blue,thick, fill] (3.5,3.5) -- (3.5,4) --(4,4) -- (4,3.5) --(3.5,3.5);
\draw [|-|,thick] (0,-.2) -- (.5,-.2) node at (.25,-.5) {$\frac{1}{M}$};
\end{tikzpicture}
\caption{Support of $\Q_{M}$.}
\label{fig:payoff}
\end{figure}
Observe that 
\begin{align*}
\bA^1_2(\Q_M) =  \sup_{\Delta_1 \in \mathcal H^1 }\E_{\Q_M}[\Delta_1\cdot(S_{2}-S_1)] &=  \sup_{\Delta_1 \in \mathcal H^1 } \E_{\Q_M}\left[\Delta_1 \cdot \left(\E_{\Q_M}[S_{2}\mid\cF_1]-S_1\right)\right]\\
&= \E_{\Q_M}[|\E_{\Q_M}[S_2\mid \cF_1]-S_1|],
\end{align*}
where $\cF_1$ denotes the $\sigma$-algebra generated by $S_1$, and the second line holds as the supremum is attained by taking $\Delta_1=\mbox{sgn}(\E_{\Q_M}[S_{2}\mid\cF_1]-S_1)$. As
\[
\E_{\Q_M}[S_2\mid S_1=x]=\sum_{i=0}^{M-1}\frac{2i+1}{2M}1_{\{[\frac{i}{M},\frac{i+1}{M})\}}(x). 
\]
we obtain
\begin{equation}\label{A^N_2}
\bA^{1}_2(\Q_M) = \E_{\Q_M}[|\E_{\Q_M}[S_2\mid S_1]-S_1|] = \sum_{i=0}^{M-1}\frac{1}{4M^2} = \frac{1}{4M}\to 0,\quad \hbox{as}\ M\to\infty. 
\end{equation}
That is, $\Q_M\overset{\rho}{\to}\cM$. It follows that $\widetilde P(\Phi) \ge \limsup_{M\to\infty} \E_{\Q_M}[\Phi] =1$, where the equality stems from $\E_{\Q_M}[\Phi]=1$ for all $M\in\N$, by the definition of $\Phi$ and \eqref{density}. As $\Phi\le 1$ readily implies $\widetilde P(\Phi)\le 1$, we conclude $\widetilde P(\Phi)=1=D(\Phi)$. 
\end{eg}

This paper is devoted to the derivation of Theorem~\ref{thm:main}. It will be done through a delicate two-step plan, to be carried out in detail in Sections \ref{sec:P^N=D^N} and \ref{pf.thm.main}. We give a brief outline as follows. 

For any $N\in\N$, recall $D^N(\Phi)$, the portfolio-constrained model-free superhedging price defined below \eqref{H^N}. Also, consider
\begin{equation}\label{primal^N}
P^N(\Phi):=\sup_{\Q\in\Pi}\{\E_\Q[\Phi]-\text{\bf A}^{N}_T(\Q)\}.
\end{equation}
As a direct consequence of Fahim and Huang \cite[Theorem 3.14]{FH16}, $D^N(\Phi)$ can be characterized, in the same spirit of Proposition~\ref{prop:duality usc}, as follows. 

\begin{prop}\label{prop:duality usc N}
Given $\Phi\in\operatorname{USC}(\Omega)$ that satisfies \eqref{Phi bounded below}, $D^N(\Phi)=P^N(\Phi)$ for all $N\in\N$. 
\end{prop}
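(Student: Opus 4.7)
The plan is to derive Proposition~\ref{prop:duality usc N} directly from \cite[Theorem 3.14]{FH16}, whose framework encompasses precisely the portfolio-constrained semi-static superhedging problem considered here. The argument splits into the two standard inequalities, the first of which is essentially tautological.

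For the weak duality $D^N(\Phi)\ge P^N(\Phi)$, I would fix any admissible pair $(u,\Delta)\in L^1(\mu)\times\cH^N$ satisfying $\Psi_{u,\Delta}\ge \Phi$ pointwise on $\Omega$, together with any $\Q\in\Pi$. Integrating the pointwise inequality against $\Q$ yields $\mu(u)+\E_\Q[(\Delta\cdot S)_T]\ge \E_\Q[\Phi]$, where the outer expectation convention from Section~\ref{subsec:notation} handles measurability of $\Phi$. Since $\Delta\in\cH^N$, we have $\E_\Q[(\Delta\cdot S)_T]\le \bA^N_T(\Q)$ by \eqref{def: A_t}, and hence $\mu(u)\ge \E_\Q[\Phi]-\bA^N_T(\Q)$. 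Taking the infimum over admissible $(u,\Delta)$ on the left and the supremum over $\Q\in\Pi$ on the right gives $D^N(\Phi)\ge P^N(\Phi)$.

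For the reverse inequality $D^N(\Phi)\le P^N(\Phi)$, I would invoke \cite[Theorem 3.14]{FH16}. The hypotheses to verify are standard: $\cH^N$ is a convex set containing $0$; the marginals $\mu_1,\ldots,\mu_T$ have finite first moments so that $\Pi$ is weakly compact (recall the discussion below \eqref{defn:Pi}); and $\Phi$ is upper semicontinuous with the linear growth bound \eqref{Phi bounded below}, which supplies the uniform integrability used in the minimax step of \cite{FH16}. Applying the theorem then yields the desired equality. The main (and only) obstacle is the notational translation: one must confirm that the dual penalty produced in \cite{FH16} for the constraint $|\Delta_t|\le N$ coincides with our $\bA^N_T(\Q)$, which is immediate from the reduction to continuous strategies recorded in \eqref{def: A_t} and \cite[Lemma 3.3]{FH16}. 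With this identification, no additional argument is required.
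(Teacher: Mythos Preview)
Your proposal is correct and matches the paper's approach: the paper simply states that Proposition~\ref{prop:duality usc N} is ``a direct consequence of Fahim and Huang \cite[Theorem 3.14]{FH16}'' without giving a separate proof, and you likewise invoke that theorem for the nontrivial inequality while spelling out the elementary weak-duality direction.
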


Section \ref{sec:P^N=D^N} focuses on extending this portfolio-constrained duality to one that allows for {\it upper semi-analytic} $\Phi$. Intriguingly, by using Choquet's capacity theory, we will show that the same duality $D^N(\Phi)=P^N(\Phi)$ simply holds for upper semi-analytic $\Phi$; there is no need to adjust $P^N(\Phi)$. By taking $N\to\infty$, Section \ref{pf.thm.main} elaborates how $D^N(\Phi)=P^N(\Phi)$ turns into the desired duality \eqref{general duality}.


\section{Complete Duality under Portfolio Constraints}\label{sec:P^N=D^N}
Given $N\in\N$, the goal of this section is to establish the complete duality $D^N(\Phi)=P^N(\Phi)$ for upper semi-analytic $\Phi$. 
As such a duality is known to hold for upper semicontinuous $\Phi$ (Proposition~\ref{prop:duality usc N}), our strategy is to treat $P^N$ and $D^N$ as functionals, and exploit their continuity properties.
 
Let us first recall the notion of a Choquet capacity. Recall also the notation in Section~\ref{subsec:notation}.

\begin{defn}\label{defn:choquet-capcity}
A functional $C:\cG(\Omega)\to\R$ is called a Choquet capacity associated with $\USC(\Omega)$ (or simply capacity) if it satisfies 
\begin{itemize}
\item[(i)] $C(\phi)\le C(\psi)$ if $\phi\le\psi$;
\item[(ii)] if $\phi_i\uparrow\phi$, then $\sup_{i\in\N}C(\Phi_i)=C(\Phi)$;
\item[(iii)] for any sequence $\{\phi_i\}$ in $\USC(\Omega)$ such that $\phi_i\downarrow\phi$, $\inf_{i\in\N}C(\Phi_i)=C(\Phi)$.
\end{itemize}
\end{defn}

Choquet's capacitability theorem (see \cite[Proposition 2.11]{Kellerer84} or \cite[Section 3]{Choquet59}) asserts a desirable continuity property of a capacity.

\begin{lem}\label{lem:capacity}
Let $C:\cG(\Omega)\to\R$ be a Choquet capacity associated with $\USC(\Omega)$. Then, for any $\Phi\in\USA(\Omega)$, 
\[
C(\Phi)=\sup\{C(\phi)\;:\;\phi\le\Phi~\text{ with $\phi\in\USC(\Omega)$}\}.
\]
Hence, if two capacities $C_1$ and $C_2$ coincide on $\USC(\Omega)$, they coincide on $\USA(\Omega)$.
\end{lem}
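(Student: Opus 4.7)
The plan is straightforward: the ``$\ge$'' direction is immediate from monotonicity, while the hard part is ``$\le$'', for which I would use the Souslin representation of upper semi-analytic functions combined with the classical Choquet diagonal construction. The ``Hence'' clause will be an immediate corollary of the identity. To dispatch the easy direction, note that by property (i) of \defref{defn:choquet-capcity}, every $\phi\in\USC(\Omega)$ with $\phi\le\Phi$ satisfies $C(\phi)\le C(\Phi)$, hence $\sup\{C(\phi):\phi\le\Phi,\ \phi\in\USC(\Omega)\}\le C(\Phi)$.

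For the reverse inequality, the starting point is the Souslin representation: any $\Phi\in\USA(\Omega)$ admits a form $\Phi(x)=\sup_{\sigma\in\N^\N}\phi(x,\sigma)$ for some $\phi\in\USC(\Omega\times\N^\N)$, where $\N^\N$ carries the product topology. For each $s=(s_1,s_2,\dots)\in\N^\N$, I would set $K_s:=\prod_{i\in\N}\{1,\dots,s_i\}$ (compact by Tychonoff) and
\[
\Phi_s(x) \;:=\; \sup_{\sigma\in K_s}\phi(x,\sigma).
\]
This produces a USC minorant of $\Phi$, by the standard fact that the supremum of a jointly USC function over a fixed compact set is USC in the remaining variable. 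The family $\{\Phi_s\}_{s\in\N^\N}$ is directed under coordinatewise order and $\Phi_s\uparrow\Phi$ as every coordinate of $s$ tends to infinity. The task then reduces to extracting a single USC minorant whose capacity is within $\eps$ of $C(\Phi)$.

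The key step is the classical Choquet diagonal construction. Fix $\eps>0$, and proceed inductively: at stage $n$, having fixed $s_1,\dots,s_{n-1}$, property (ii) applied to the increasing-in-$s_n$ family of partial approximations (with coordinates past $n$ left unrestricted) lets us pick $s_n$ large enough to recoup all but $\eps/2^n$ of the capacity lost at stage $n-1$; property (iii) is then invoked to collapse the resulting countable list of partial approximations onto the single USC function $\Phi_s$ over the compact $K_s$ built from the constructed sequence, yielding $C(\Phi_s)\ge C(\Phi)-\eps$. Sending $\eps\downarrow 0$ closes the identity, and the ``Hence'' clause follows at once because on both sides the value at $\Phi\in\USA(\Omega)$ is determined by the restriction to $\USC(\Omega)$. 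The main obstacle will be the telescoping bookkeeping in this diagonal construction, which is the heart of Choquet's capacitability theorem; for the full technical execution I would rely on \cite[Proposition~2.11]{Kellerer84} rather than reconstruct every step from scratch.
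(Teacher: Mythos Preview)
The paper does not prove this lemma at all; it simply records it as Choquet's capacitability theorem and cites \cite[Proposition~2.11]{Kellerer84} and \cite[Section~3]{Choquet59}. Your proposal ultimately defers to the very same reference, so the approaches coincide; the extra sketch you give (Souslin representation of $\USA$ functions, USC minorants $\Phi_s$ over the compact cylinders $K_s$, and the Choquet diagonal selection using properties (ii) and (iii)) is an accurate outline of Kellerer's argument and adds helpful context that the paper omits.
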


\begin{rem}
The original Choquet's capacitability theorem gives a more general result: if $C_1$ and $C_2$ are two Choquet capacities associated with a set of functions $\A$ and they coincide on functions in $\A$, then they coincide on $\A$-Suslin functions. Here, we take $\A=\USC(\Omega)$ in Definition~\ref{defn:choquet-capcity} and Lemma~\ref{lem:capacity}, and note that ``$\USC(\Omega)$-Suslin functions'' are simply ``upper semi-analytic functions ($\USA(\Omega)$)''; see  \cite[Proposition 2.13]{Kellerer84} and \cite[Definition 7.21]{Bertsekas-Shreve-book}.    
\end{rem}


\subsection{Continuity of $P^N$}\label{subsec:P^N}

\begin{prop}\label{prop:P^N upward}
Consider $\{\Phi_i\}_{i\in\N}$ in $\cG(\Omega)$ for which there exists $K>0$ such that for each $i\in\N$,
\begin{equation}\label{Phi bounded below'}
\Phi_i(x)\ge -K(1+x_1+\cdots+x_T),\quad \forall x=(x_1,\cdots,x_T)\in\Omega.
\end{equation}  
If $\Phi_i\uparrow\Phi$, then
\begin{align*}
\sup_{i\in\N}P^N(\Phi_i)=P^N(\Phi),\quad \forall N\in\N.
\end{align*}
\end{prop}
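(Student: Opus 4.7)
\textbf{Proof plan for Proposition~\ref{prop:P^N upward}.}
The $\le$ direction is immediate: outer expectation is monotone, so $\Phi_i\le \Phi$ yields $\E_\Q[\Phi_i]\le \E_\Q[\Phi]$ for every $\Q\in\Pi$, and subtracting the $\Q$-independent quantity $\bA^N_T(\Q)$ and taking suprema gives $P^N(\Phi_i)\le P^N(\Phi)$, hence $\sup_i P^N(\Phi_i)\le P^N(\Phi)$.

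For the $\ge$ direction, I would argue by picking a near-optimizer for $P^N(\Phi)$ and transferring the gain to the $\Phi_i$'s through a monotone convergence argument. Fix $\eps>0$; if $P^N(\Phi)<\infty$, choose $\Q\in\Pi$ with $\E_\Q[\Phi]-\bA^N_T(\Q)\ge P^N(\Phi)-\eps$ (the case $P^N(\Phi)=\infty$ is treated analogously by taking a sequence $\Q_n$ along which $\E_{\Q_n}[\Phi]-\bA^N_T(\Q_n)\to\infty$). Since $\Q\in\Pi$, each marginal satisfies $\E_\Q[S_t]=m(\mu_t)<\infty$ by \eqref{finite moment}, so the Borel function $H(x):=K(1+x_1+\cdots+x_T)$ is $\Q$-integrable. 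By \eqref{Phi bounded below'}, the functions $\widetilde\Phi_i:=\Phi_i+H$ are nonnegative, and $\widetilde\Phi_i\uparrow\widetilde\Phi:=\Phi+H$ pointwise on $\Omega$.

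The key technical step is then monotone convergence for outer expectations: for any nondecreasing sequence of nonnegative functions $\widetilde\Phi_i\uparrow\widetilde\Phi$ on a probability space, $\E^*_\Q[\widetilde\Phi_i]\uparrow\E^*_\Q[\widetilde\Phi]$ (a standard consequence of the existence of measurable envelopes, as in \cite[Proposition 2.13]{Kellerer84}). Combined with the additivity property $\E^*_\Q[\Psi+H]=\E^*_\Q[\Psi]+\E_\Q[H]$ valid for Borel $\Q$-integrable $H$ and arbitrary $\Psi$, this yields $\E_\Q[\Phi_i]\uparrow \E_\Q[\Phi]$ as $i\to\infty$. Hence for all $i$ sufficiently large, $\E_\Q[\Phi_i]\ge \E_\Q[\Phi]-\eps$, and therefore
\[
P^N(\Phi_i)\ge \E_\Q[\Phi_i]-\bA^N_T(\Q)\ge \E_\Q[\Phi]-\bA^N_T(\Q)-\eps\ge P^N(\Phi)-2\eps.
\]
Since $\eps>0$ is arbitrary, $\sup_i P^N(\Phi_i)\ge P^N(\Phi)$, completing the proof.

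The main delicate point is the justification of monotone convergence for outer expectations; this is where the linear growth assumption \eqref{Phi bounded below'} and the finite first moments of the marginals enter, converting the possibly signed $\Phi_i$'s into nonnegative $\widetilde\Phi_i$'s so that the standard outer-integral MCT applies, while the finite shift $\E_\Q[H]$ cleanly cancels out.
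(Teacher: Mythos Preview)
Your proof is correct and rests on the same key ingredient as the paper: the monotone convergence theorem for outer expectations, applied after shifting by the $\Q$-integrable majorant $H(x)=K(1+x_1+\cdots+x_T)$ to reduce to nonnegative integrands. The paper's presentation is slightly more streamlined---rather than fixing an $\eps$-near-optimizer $\Q$ and passing to the limit in $i$, it simply interchanges the two suprema, writing $\sup_i P^N(\Phi_i)=\sup_{\Q\in\Pi}\sup_i(\E_\Q[\Phi_i]-\bA^N_T(\Q))=\sup_{\Q\in\Pi}(\E_\Q[\Phi]-\bA^N_T(\Q))=P^N(\Phi)$---but the content is identical.
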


\begin{proof}
Since $\Phi_1$ satisfies \eqref{Phi bounded below'}, the monotone convergence theorem for outer expectation gives $\E_{\Q}[\Phi_i]\uparrow\E_{\Q}[\Phi]$, for all $\Q\in\Pi$. By changing the order of two supremums, we get 
\[
\sup_{i\in\N}P^N(\Phi_i) = \sup_{\Q\in\Pi}\sup_{i\in\N} \left(\E_\Q[\Phi_i]-\text{\bf A}^{N}_T(\Q)\right)= \sup_{\Q\in\Pi}\left(\E_\Q[\Phi]-\text{\bf A}^{N}_T(\Q)\right)=P^N(\Phi),
\]
for each $N\in\N$.
\end{proof}

\begin{prop}\label{prop:P^N downward}
Consider $\{\Phi_i\}_{i\in\N}$ in $\USC(\Omega)$ for which there exists $K>0$ such that \eqref{Phi bounded below} is satisfied for each $\Phi_i$. If $\Phi_i\downarrow\Phi$, then
\[
\inf_{i\in\N}P^N(\Phi_i)=P^N(\Phi),\quad \forall N\in\N.
\]
\end{prop}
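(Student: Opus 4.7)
The inequality $\inf_i P^N(\Phi_i)\ge P^N(\Phi)$ is immediate from the monotonicity of $P^N$ (which follows from $\Phi_i\ge \Phi$ term-by-term in the definition of $P^N$). The content is the reverse inequality, and my strategy is to produce maximizers for each $P^N(\Phi_i)$ and extract a weak limit. The starting observation is that, for each fixed $i\in\N$, the functional
\[
\Q\mapsto J_i(\Q):=\E_\Q[\Phi_i]-\bA^{N}_T(\Q)
\]
is upper semicontinuous on $\Pi$ under the topology of weak convergence. Indeed, repeating verbatim the argument used in the proof of Lemma~\ref{lem:weak convergence to cM}, with $\cH^{N}_c$ in place of $\cH^{1}_c$, shows that $\Q\mapsto \bA^{N}_T(\Q)$ is lower semicontinuous (as a supremum of functionals $\Q\mapsto \E_\Q[(\Delta\cdot S)_T]$ that are continuous by Villani's Lemma 4.3, since $|(\Delta\cdot x)_T|$ is linearly dominated). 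And $\Q\mapsto \E_\Q[\Phi_i]$ is upper semicontinuous because $\Phi_i\in\USC(\Omega)$ is bounded above by $h(x):=K(1+x_1+\cdots+x_T)$; this is a routine truncation argument (detailed below). Since $\Pi$ is weakly compact, $J_i$ attains its supremum at some $\Q_i^*\in\Pi$, giving $P^N(\Phi_i)=J_i(\Q_i^*)$.

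Next I would invoke weak compactness of $\Pi$ once more to extract a subsequence $\Q_{i_k}^*\to\Q^*\in\Pi$. For any fixed $j\in\N$ and every $i_k\ge j$, the decreasing assumption $\Phi_{i_k}\le \Phi_j$ yields
\[
P^N(\Phi_{i_k})=\E_{\Q_{i_k}^*}[\Phi_{i_k}]-\bA^{N}_T(\Q_{i_k}^*)\ \le\ \E_{\Q_{i_k}^*}[\Phi_j]-\bA^{N}_T(\Q_{i_k}^*)\ =\ J_j(\Q_{i_k}^*).
\]
Taking $\limsup_{k\to\infty}$ and using the upper semicontinuity of $J_j$, I obtain $\limsup_k P^N(\Phi_{i_k})\le J_j(\Q^*)=\E_{\Q^*}[\Phi_j]-\bA^{N}_T(\Q^*)$. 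Now I would let $j\to\infty$: since $\Phi_1\le h$ with $\E_{\Q^*}[h]=K\bigl(1+\sum_{t=1}^T m(\mu_t)\bigr)<\infty$, the decreasing monotone convergence theorem (applied to the nonnegative increasing sequence $h-\Phi_j$) gives $\E_{\Q^*}[\Phi_j]\downarrow \E_{\Q^*}[\Phi]$. Therefore
\[
\limsup_{k\to\infty}P^N(\Phi_{i_k})\ \le\ \E_{\Q^*}[\Phi]-\bA^{N}_T(\Q^*)\ \le\ P^N(\Phi).
\]
Since $\{P^N(\Phi_i)\}_{i\in\N}$ is a decreasing sequence (by monotonicity of $P^N$), its infimum equals its limit along every subsequence, so $\inf_i P^N(\Phi_i)=\lim_k P^N(\Phi_{i_k})\le P^N(\Phi)$, finishing the proof.

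\textbf{Main obstacle.} The one point that requires care is the upper semicontinuity of $\Q\mapsto \E_\Q[\Phi_i]$, because the reference result in Lemma~\ref{lem:weak convergence to cM} (Villani's Lemma 4.3) is stated for \emph{continuous} integrands with linear growth, while here $\Phi_i$ is merely upper semicontinuous. My plan is to patch this by truncation: for $M>0$, set $\Phi_i^M:=\Phi_i\vee(-M)$, which is USC and bounded (above by $h$, below by $-M$). The Portmanteau theorem for bounded USC functions yields $\limsup_n \E_{\Q_n}[\Phi_i^M]\le \E_\Q[\Phi_i^M]$ whenever $\Q_n\to\Q$ weakly in $\Pi$. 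Since $\Phi_i\le \Phi_i^M$, one gets $\limsup_n \E_{\Q_n}[\Phi_i]\le \E_\Q[\Phi_i^M]$, and then letting $M\to\infty$ with the decreasing MCT (using $\Phi_i^M\le h$ and $\E_\Q[h]<\infty$) produces the desired $\limsup_n \E_{\Q_n}[\Phi_i]\le \E_\Q[\Phi_i]$. Crucially, this uses that every $\Q\in\Pi$ has the fixed marginals $\mu_1,\ldots,\mu_T$ with finite first moments, so $h$ is integrable against every $\Q\in\Pi$ with a value independent of $\Q$.
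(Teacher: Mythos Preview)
Your proof is correct and relies on the same ingredients as the paper's: weak compactness of $\Pi$, upper semicontinuity of $\Q\mapsto \E_\Q[\Phi_i]-\bA^N_T(\Q)$, and the reverse monotone convergence theorem. The only cosmetic difference is that the paper works with the closed superlevel sets $\cM^N(\Phi_i,\delta):=\{\Q\in\Pi: \E_\Q[\Phi_i]-\bA^N_T(\Q)\ge\delta\}$ for $\delta<\inf_i P^N(\Phi_i)$ and invokes the finite intersection property of the decreasing family of nonempty compacta to produce a single $\tilde\Q\in\bigcap_i\cM^N(\Phi_i,\delta)$, whereas you pick maximizers $\Q_i^*$ of $J_i$ and extract a weakly convergent subsequence; both routes yield a limiting measure against which the decreasing MCT is applied. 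Your explicit Portmanteau-plus-truncation argument for the upper semicontinuity of $\Q\mapsto\E_\Q[\Phi_i]$ when $\Phi_i$ is merely USC (and unbounded below) is a welcome addition---the paper simply cites \cite[Lemma 4.3]{Villani-book-09} at that step.
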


\begin{proof}
Fix $N\in\N$. As $\Phi_i\downarrow\Phi$ clearly implies $\inf_{i\in\N}P^N(\Phi_i)\ge P^N(\Phi)$, we focus on proving the ``$\le$'' relation.  Assume $\inf_{i\in\N}P^N(\Phi_i)>-\infty$, otherwise the proof would be trivial. For any $\delta<\inf_{i\in\N} P^N(\Phi_i)$, define 
\[
\cM^N(\Phi_i,\delta):=\{\Q\in\Pi: \E_\Q[\Phi_i]-\bA^{N}_T(\Q)\ge\delta\}\quad\hbox{for all}\ N\in\N.
\]
We intend to show that $\cM^N(\Phi_i,\delta)$ is compact under the topology of weak convergence. As $\Pi$ is compact (recall the explanation below \eqref{defn:Pi}), it suffices to prove that $\cM^N(\Phi_i,\delta)$ is closed. Since $\Phi_i$ is upper semicontinuous and satisfies \eqref{Phi bounded below}, we deduce from \cite[Lemma 4.3]{Villani-book-09} that $\Q\mapsto \E_\Q[\Phi_i]$ is upper semicontinuous under the topology of weak convergence. On the other hand, 
by the same argument in the proof of Lemma~\ref{lem:weak convergence to cM}, 
$\Q\mapsto \bA^{N}_T(\Q)$ is lower semicontinuous under the topology of weak convergence. As a result, $\Q\mapsto \E_\Q[\Phi_i]-\bA^{N}_T(\Q)$ is upper semicontinuous, which gives the desired closedness of $\cM^N(\Phi_i,\delta)$. 

Now, since $\{\cM^N(\Phi_i,\delta)\}_{i\in\N}$ is a nonincreasing sequence of compact sets, $\bigcap_{i=1}^\infty\cM^N(\Phi_i,\delta)\neq\emptyset$. Take $\tilde\Q\in\bigcap_{i=1}^\infty\cM^N(\Phi_i,\delta)$, and observe that 
\[
P^N(\Phi)\ge \E^{\tilde\Q}[\Phi]-\bA^{N}_T(\tilde\Q)=\lim_{i\to\infty}\E^{\tilde\Q}[\Phi_i]-\bA^{N}_T(\tilde\Q)\ge\delta,
\]
where the equality follows from the reverse monotone convergence theorem, applicable here as \eqref{Phi bounded below} is satisfied for each $\Phi_i$, and the last inequality results from the definition of $\cM^N(\Phi_i,\delta)$. With $\delta<\inf_{i\in\N} P^N(\Phi_i)$ arbitrarily chosen, we conclude $\inf_{i\in\N}P^N(\Phi_i)\le P^N(\Phi)$. 
\end{proof}



\subsection{Continuity of ${D}^N$}\label{subsec:D^N}
The downward continuity of $D^N$ is a consequence of Propositions~\ref{prop:duality usc N} and \ref{prop:P^N downward}.

\begin{prop}\label{prop:D^N downward}
Consider $\{\Phi_i\}_{i\in\N}$ in $\USC(\Omega)$ for which there exists $K>0$ such that \eqref{Phi bounded below} is satisfied for each $\Phi_i$. If $\Phi_i\downarrow\Phi$, then
\[
\inf_{i\in\N} D^N(\Phi_i)= D^N(\Phi),\quad \forall N\in\N.
\]
\end{prop}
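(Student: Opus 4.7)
The plan is to reduce the downward continuity of $D^N$ to the corresponding continuity of $P^N$ (Proposition~\ref{prop:P^N downward}) by invoking the portfolio-constrained duality $D^N(\Phi)=P^N(\Phi)$ that already holds for upper semicontinuous payoffs with the linear growth bound \eqref{Phi bounded below} (Proposition~\ref{prop:duality usc N}). In effect, I would chase the target identity through the bridge $D^N\leftrightarrow P^N$ at each level of the sequence and at the limit.

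The only preliminary step is to check that $\Phi:=\inf_{i\in\N}\Phi_i$ itself satisfies the hypotheses of Proposition~\ref{prop:duality usc N}. Upper semicontinuity of $\Phi$ is automatic, as a pointwise infimum of upper semicontinuous functions. The growth bound \eqref{Phi bounded below} transfers from $\Phi_1$ through $\Phi\le\Phi_1\le K(1+x_1+\cdots+x_T)$. With these two facts in hand, Proposition~\ref{prop:duality usc N} applies both to each $\Phi_i$ and to $\Phi$, giving $D^N(\Phi_i)=P^N(\Phi_i)$ for every $i\in\N$ and $D^N(\Phi)=P^N(\Phi)$. Feeding these into Proposition~\ref{prop:P^N downward} yields
\[
\inf_{i\in\N}D^N(\Phi_i)=\inf_{i\in\N}P^N(\Phi_i)=P^N(\Phi)=D^N(\Phi),
\]
which is exactly the statement to be proved.

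No genuine obstacle arises at this stage; all the analytical content has already been absorbed into Proposition~\ref{prop:P^N downward} (the upper semicontinuity of $\Q\mapsto\E_\Q[\Phi_i]$ and the lower semicontinuity of $\Q\mapsto\bA^N_T(\Q)$ under weak convergence, together with the compactness of the level sets $\cM^N(\Phi_i,\delta)$), while the portfolio-constrained duality of Proposition~\ref{prop:duality usc N} supplies the exact transfer principle needed to pass from $P^N$ back to $D^N$. It is worth noting that this short route is specific to the constrained functional with $N<\infty$: the bound $N$ is precisely what allowed Proposition~\ref{prop:duality usc N} to be proved without any quasi-sure modification, and hence what makes the direct chaining here possible.
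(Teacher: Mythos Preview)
Your proof is correct and follows essentially the same route as the paper: verify that $\Phi=\inf_i\Phi_i$ remains in $\USC(\Omega)$ with the bound \eqref{Phi bounded below}, then chain $\inf_i D^N(\Phi_i)=\inf_i P^N(\Phi_i)=P^N(\Phi)=D^N(\Phi)$ via Proposition~\ref{prop:duality usc N} and Proposition~\ref{prop:P^N downward}. The additional commentary you include is accurate but not needed for the argument itself.
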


\begin{proof}
As the infimum of a sequence of upper semicontinuous functions satisfying \eqref{Phi bounded below}, $\Phi$ is again upper semicontinuous and satisfies \eqref{Phi bounded below}. It then follows from Proposition~\ref{prop:duality usc N} that 
\[
\inf_{i\in\N} D^N(\Phi_i)=\inf_{i\in\N} P^N(\Phi_i)=P^N(\Phi)=D^N(\Phi),
\]
where the second equality is due to Proposition~\ref{prop:P^N downward}. 
\end{proof}

The upward continuity of $D^N$, by contrast, is much more obscure. 
We need the following technical result, Lemma~\ref{lem:compactness for D}, to construct certain compactness for the space of semi-static strategies $(u,\Delta)$, which will facilitate the derivation of the upward continuity of $D^N$ in Proposition~\ref{prop:D^N upward} below. This lemma can be viewed as a generalization of \cite[Lemma 1.27]{Kellerer84} to the case of martingale optimal transport. The main idea involved is to extract additional compactness from the portfolio bound $N>0$ through Tychonoff's theorem.   

In Lemma~\ref{lem:compactness for D} below, let $\mathcal{B}(\R^t_+)$ be equipped with the topology of pointwise convergence. In addition, consider the product measure $\nu:=\mu_1\otimes\cdots\otimes\mu_T$ on $\Omega$, and denote by $L^1(\mu_t)$ (resp. $L^1(\nu)$) the set of $\mu_t$-integrable (resp. $\nu$-integrable) functions. 
Also recall $m(\mu_t)$, $t=1,...,T$, from \eqref{finite moment}. 
\begin{lem}\label{lem:compactness for D}
Fix $N\in\N$ and $\Phi\in \cG(\Omega)$ that satisfies \eqref{Phi bounded below'} and ${D}^N(\Phi)<\infty$. For any $\delta > {D}^N(\Phi)$, define $\mathcal{L}(\Phi,\delta,N)$ as the collection of all pairs $(\Theta,\Delta)$, with
\begin{equation}\label{Theta}
\Theta:=\left\{(u^k_{1},...,u^k_{T},W^k)\right\}_{k\in\N}\in\left(\Pi_{t=1}^T L^1(\mu_t)\times L^1(\nu)\right)^\N  \;\;\text{ \rm and }\;\; \Delta\in\cH^N,
\end{equation}
 satisfying 
\begin{enumerate}[label={\rm (\roman*)},itemindent=.1in]
\item  For each $k\in\N$, $0\le u_{t}^k\le 2k$, $\forall t=1,\cdots,T$;
\item  $u^1_{t}\le u^2_{t}\le \cdots$, $\forall t=1,\cdots,T$;
\item  For each $k\in\N$, $\mu(u^k)\le \delta+(K+2N)(1+m(\mu_{1})+\cdots+m(\mu_{T}))$;
\item  For each $k\in\N$, $W^k\in L^1(\nu)$ with $0\le W^k\le \Lambda$, where $\Lambda\in L^1(\nu)$ is defined by
\[
\Lambda(x) := 2N(x_1+\cdots+x_T);  
\]
moreover, $W^k=0$ on the set $\{x:\Lambda(x) <k\}$;
\item  For each $k\in\N$, $\oplus u^k  \ge (\Phi+\Gamma)\wedge k+(\Delta\cdot x)_T-W^k$, where $\Gamma\in L^1(\nu)$ is defined by
\[
\Gamma(x) := (K+2N)(1+x_1+\cdots+x_T).
\]
Here, the constant $K>0$ in (iii) and (v) comes from \eqref{Phi bounded below'}.
\end{enumerate}
The set $\mathcal{L}(\Phi,\delta,N)$ is a nonempty compact subset of $\left(\Pi_{t=1}^T L^1(\mu_t)\times L^1(\nu)\right)^\N\times\Pi_{t=1}^{T-1}\mathcal{B}(\R^t_+)$, under the product of the weak topologies of the spaces $L^1(\mu_t)$, $L^1(\nu)$, and $\mathcal{B}(\R^t_+)$.
\end{lem}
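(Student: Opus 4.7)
The plan is to establish the two assertions---nonemptiness and compactness---separately, with the bulk of the work being the compactness step via Tychonoff's theorem and weak $L^1$-compactness arguments.

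\textbf{Nonemptiness.} Since $\delta > D^N(\Phi)$, the definition of $D^N$ supplies a semi-static superhedge $(u,\Delta) \in L^1(\mu) \times \cH^N$ with $\oplus u + (\Delta\cdot x)_T \ge \Phi$ and $\mu(u) < \delta$. I would first shift each $u_t$ by $(K+2N)(\tfrac{1}{T} + x_t)$, producing $\bar u$ with $\mu(\bar u) < \delta + (K+2N)(1 + \sum_t m(\mu_t))$ and $\oplus \bar u + (\Delta\cdot x)_T \ge \Phi + \Gamma \ge 0$ (the second inequality by \eqref{Phi bounded below'}). Next, I truncate pointwise to generate $u^k_t \in [0, 2k]$, monotone in $k$, and let $W^k$ absorb the three resulting discrepancies: (a) the clip at $2k$ from above, (b) the clip at $0$ from below on the negative parts of $\bar u_t$, and (c) replacement of $\Phi+\Gamma$ by $(\Phi+\Gamma)\wedge k$. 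Using $|(\Delta\cdot x)_T| \le \Lambda$, I would verify that $W^k$ can be chosen in $[0,\Lambda]$ with support in $\{\Lambda \ge k\}$, giving (iv), while the integral shift supplies (iii).

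\textbf{Compact ambient space.} Each coordinate of $(\Theta,\Delta)$ lives in a compact set: $u^k_t \in L^1(\mu_t)$ with $0 \le u^k_t \le 2k$ sits in a uniformly integrable set (dominated by the constant $2k$), hence weakly compact in $L^1(\mu_t)$ by Dunford--Pettis; $W^k \in L^1(\nu)$ with $0 \le W^k \le \Lambda \in L^1(\nu)$ is dominated by $\Lambda$, again weakly compact by Dunford--Pettis; and $\Delta_t \in [-N,N]^{\R^t_+}$ is compact under the pointwise topology by Tychonoff's theorem. A second application of Tychonoff over the countable index $k\in\N$ and the finite index $t$ yields compactness of the full ambient product, in which $\cL(\Phi,\delta,N)$ lives.

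\textbf{Closedness.} It remains to check that each of (i)--(v) is preserved under limits in the product topology. Conditions (i), (ii), and the bounds in (iv) are inequalities testable against nonneg functions in $L^\infty$, hence closed under weak $L^1$ limits. The support condition $W^k = 0$ on $\{\Lambda < k\}$ in (iv) is obtained by testing against $\mathbf{1}_{\{\Lambda < k\}} \in L^\infty(\nu)$. Condition (iii) is a continuous linear functional (integration against $1$). The delicate condition is (v): under pointwise convergence $\Delta^n \to \Delta$, the bound $|(\Delta^n\cdot x)_T| \le \Lambda \in L^1(\nu)$ and dominated convergence give $(\Delta^n\cdot x)_T \to (\Delta\cdot x)_T$ strongly in $L^1(\nu)$, while $\oplus u^{k,n} \to \oplus u^k$ weakly in $L^1(\nu)$---this follows from weak $L^1(\mu_t)$ convergence of each $u^{k,n}_t$ via the product structure of $\nu = \mu_1 \otimes \cdots \otimes \mu_T$ and Fubini---and $W^{k,n} \to W^k$ weakly in $L^1(\nu)$. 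Weak $L^1$ limits preserve a.e.\ inequalities (by testing against nonneg $L^\infty$ functions), so the $\nu$-a.e.\ inequality in (v) passes to the limit.

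\textbf{Main obstacle.} The principal difficulty I foresee lies entirely in the nonemptiness step: enforcing $u^k_t \ge 0$ pointwise while simultaneously preserving (iii) and (v), with $W^k$ confined to $[0,\Lambda]$ and vanishing on $\{\Lambda < k\}$. The shift from $u$ to $\bar u$ need not produce pointwise nonneg $\bar u_t$ individually (only the sum $\oplus\bar u$ is controlled), so the construction must bookkeep the negative parts of $\bar u_t$ together with the sign of $(\Delta\cdot x)_T$, and verify via the bounds $|(\Delta\cdot x)_T| \le \Lambda$ and $\Phi + \Gamma \ge 0$ that the accumulated residual never exceeds $\Lambda$ and vanishes wherever $\Lambda$ is small.
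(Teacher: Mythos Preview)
Your overall structure (nonemptiness; ambient compactness via Dunford--Pettis and Tychonoff; closedness) matches the paper's. The compactness of the ambient product is handled exactly as the paper does. But there are two real gaps.

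\textbf{Nonemptiness: the bound (iii) fails for your truncation.} You shift to $\bar u_t = u_t + (K+2N)(\tfrac1T+x_t)$ and then set $u^k_t = (\bar u_t\vee 0)\wedge 2k$. Since $\bar u_t$ need not be nonnegative (as you acknowledge), the clip at $0$ \emph{increases} the integrals: $\int u^k_t\,d\mu_t \le \int (\bar u_t)^+\,d\mu_t = \int \bar u_t\,d\mu_t + \int (\bar u_t)^-\,d\mu_t$, and the extra term $\sum_t\int(\bar u_t)^-\,d\mu_t$ is uncontrolled. So your $\mu(u^k)$ need not satisfy (iii). Pushing the negative parts into $W^k$ does not help either, because $(\bar u_t)^-$ can be unbounded while $W^k\le\Lambda$. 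The paper's missing idea is this: from $\oplus u \ge -\Gamma$, i.e.\ $\sum_t[u_t(x_t)+(K+2N)(\tfrac1T+x_t)]\ge 0$ for all $x$, one deduces $\sum_t\inf_{y}[u_t(y)+(K+2N)(\tfrac1T+y)]\ge 0$ (the functions separate), so there exist constants $a_t$ with $\sum_t a_t=0$ and $u_t+a_t+(K+2N)(\tfrac1T+x_t)\ge 0$ for each $t$. Redefining $\bar u_t$ with this extra additive $a_t$ makes each $\bar u_t\ge 0$ \emph{individually} without changing $\mu(\bar u)$, after which one simply sets $u^k_t:=\bar u_t\wedge 2k$ and $W^k:=(\Delta\cdot x)_T-(\Delta\cdot x)_T\wedge k$. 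All of (i)--(v) then fall out cleanly; in particular (iii) holds because $u^k_t\le\bar u_t$, and (iv) holds because $|(\Delta\cdot x)_T|\le\Lambda$.

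\textbf{Closedness: your testing argument does not handle non-measurable $\Phi$.} Condition (v) involves $(\Phi+\Gamma)\wedge k$, and the lemma only assumes $\Phi\in\cG(\Omega)$, so this function need not be Borel. Your plan ``test against nonnegative $L^\infty$ functions'' requires integrating the right-hand side, which is not well-defined here. The paper takes a different route that sidesteps this: it checks that $\mathcal L(\Phi,\delta,N)$ is convex and \emph{strongly} closed (strong $L^1$-convergence plus pointwise convergence of $\Delta$; pass to a.e.\ convergent subsequences, so the pointwise inequality (v) survives for arbitrary $\Phi$, then modify on a $\nu$-null set), and then invokes the fact that a convex strongly closed set is weakly closed. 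You never mention convexity, but it is exactly what makes the closedness step work cleanly at this generality.
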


\begin{proof} 
{\bf Step 1:} We show that $\mathcal{L}(\Phi,\delta,N)$ is {\it nonempty}. As $\delta > {D}^N(\Phi)$, there exist $u=(u_1,...,u_T)\in L^1(\mu)$ and $\Delta\in\mathcal{H}^N$ such that 
\[
\mu(u)\le\delta\ \ \ \hbox{and}\ \ \ \oplus u + (\Delta\cdot x)_T\ge \Phi.
\] 
As $\Phi$ satisfies \eqref{Phi bounded below'} and $|(\Delta\cdot x)_T|\le \Lambda(x)$, we have $\oplus u(x)\ge \Phi(x)-(\Delta\cdot x)_T\ge -\Gamma(x)$.
This implies that we can find constants $a_{1},a_2,...,a_{T}$ such that $\sum_{t=1}^T a_{t}=0$ and $a_{t}+u_{t}\ge -(K+2N)(1/T +x_{t})$ 
for all $t=1,...,T$. Now, define $\bar u_{t}:=a_{t}+u_{t}+(K+2N)(1/T+ x_{t})\ge 0$ for all $t=1,...,T$. Then, one can write
\[
\oplus \bar u \ge \Phi+ \Gamma + (\bar\Delta\cdot x)_T,\ \ \ \text{ \rm with } \bar\Delta:=-\Delta\in \cH^N.
\]
On the other hand, by the concavity of $x\mapsto x\wedge(2 k)$, 
\[
(\oplus \bar u)\wedge (2k)\ge \left((\Phi+ \Gamma)+ (\bar \Delta\cdot x)_T\right)\wedge (2k)\ge (\Phi+ \Gamma)\wedge k + (\bar\Delta\cdot x)_T\wedge k.
\]
Since $\bar u_t\ge 0$ for all $t=1,...,T$, it can be checked that $\oplus (\bar u\wedge (2k))\ge (\oplus \bar u)\wedge (2k)$. This, together with the previous inequality, gives
\begin{equation}\label{eqn:(v)}
\oplus (\bar u\wedge (2k))\ge (\Phi+ \Gamma)\wedge k + (\bar\Delta\cdot x)_T\wedge k.
\end{equation}
We claim that $u_t^k:=\bar u_{t}\wedge (2k)$, $W^k:=(\bar\Delta\cdot x)_T-(\bar\Delta\cdot x)_T\wedge k$, and $\bar\Delta$ form an element of $\mathcal{L}(\Phi,\delta,N)$. By construction, it is straightforward to verify conditions (i), (ii), and (v). Since $\oplus \bar u^k\le \oplus \bar u=\oplus u+\Gamma$, we have $\mu(\bar u^k)\le\delta+(K+2N)\left(1+m(\mu_{1})+\cdots+m(\mu_{T})\right)$, i.e. condition (iii) is satisfied. For each $k\in\N$, observe that $0\le W^k\le |(\bar\Delta\cdot x)_T|\le \Lambda(x)$. In particular, if $\Lambda(x)\le k$, then $|(\bar\Delta\cdot x)_T|\le k$ and thus $W^k=0$ by definition. This shows that condition (iv) is satisfied.
 
{\bf Step 2:} We prove that $\mathcal{L}(\Phi,\delta,N)$ is contained in a {\it weakly compact} space of functions. Observe that the following collections of functions
\begin{align*}
U(t,k)&:=\{u\in L^1(\mu_t):0\le u\le 2k\}\quad  t=1,...,T\ \hbox{and}\ k\in\N,\\
V&:=\{W\in L^1(\nu): 0\le W \le \Lambda\}
\end{align*}
are all uniformly integrable, and thus relatively weakly compact thanks to the Dunford-Pettis theorem. It follows that the countable product $(\Pi_{t,k} U(t,k))\times V^\N$ is also relatively weakly compact. 
On the other hand, for each $t=1,\cdots,T-1$, 
\[
F_t:=\{f:\R^t_+\to\R : |f|\le N\} = \Pi_{x\in\R^t_+} [-N,N]^x 
\]
is compact under the topology of pointwise convergence, as a consequence of Tychonoff's theorem. The space $F_t$ is therefore weakly compact, and this carries over to the product space $\cH^N = \Pi_{t}F_t$. We then conclude that $\Pi_{t,k} U(t,k)\times V^\N\times \cH^N$ is a weakly compact set containing $\mathcal{L}(\Phi,\delta,N)$. 

{\bf Step 3:} We prove that $\mathcal{L}(\Phi,\delta,N)$ is {\it strongly closed}. 
Take a sequence 
\[
\left\{\{(u^{k,m}_1,\cdots,u^{k,m}_T, W^{k,m})\}_{k\in\N},\Delta^m\right\}_{m\in\N}
\]
in $\mathcal{L}(\Phi,\delta,N)$ such that it converges to $(\{(u^k_1,\cdots,u^k_T,W^k)\}_{k\in\N},\Delta)$ in the strong sense. That is, $u^{k,m}_t\to u^k_t$ in $L^1(\mu_t)$, $W^{k,m}\to W^k$ in $L^1(\nu)$, and $\Delta^{m}\to\Delta$ pointwise in $\cH^N$. We intend to show that $(\{(u^k_1,\cdots,u^k_T,W^k)\}_{k\in\N},\Delta)$ also lies in $\mathcal{L}(\Phi,\delta,N)$.

The convergence in $L^1(\mu_t)$ (resp. $L^1(\nu)$) implies the existence of a subsequence that converges $\mu_t$-a.e (resp. $\nu$-a.e.). 
Then, as $m\to\infty$, we conclude from $\oplus u^{k,m} \ge (\Phi+\Gamma)\wedge k+(\Delta^m\cdot x)_T-W^{k,m}$ that
\begin{equation}\label{eqn:property v}
\oplus u^k \ge (\Phi+\Gamma)\wedge k+(\Delta\cdot x)_T-W^k 
\end{equation}
holds outside a $\nu$-null set $\mathcal{N}$. 
We can then modify $(u^k_t)_{t=1}^T$ and $W^k$ on $\mathcal{N}$ such that \eqref{eqn:property v} holds everywhere, i.e. condition (vi) is satisfied. Also, we see from the convergence $u^{k,m}_t\to u^k_t$ and $\Delta^m\to\Delta$ that conditions (i), (ii), and (v) are satisfied, and Fatou's lemma implies the validity of (iii). From the convergence $W^{k,m}\to W^k$, we have $0\le W^k\le \Lambda$. Moreover, $W^k =0$ on $\{x:\Lambda(x)<k\}$ because $W^{k,m}=0$ on $\{x:\Lambda(x)<k\}$ for all $m\in\N$. This shows that condition (iv) is satisfied. We therefore conclude that $(\{(u^k_1,\cdots,u^k_T,W^k)\}_{k\in\N},\Delta)\in\mathcal{L}(\Phi,\delta,N)$, and thus $\mathcal{L}(\Phi,\delta,N)$ is closed under the strong topology. 

{\bf Step 4:} We prove the desired compactness of $\mathcal{L}(\Phi,\delta,N)$. Observe that $\mathcal{L}(\Phi,\delta,N)$ is convex. Since a strongly closed convex set is also weakly closed, and the weak topology of a product space coincides with the product of the weak topologies, we conclude that $\mathcal{L}(\Phi,\delta,N)$ is closed under the product of the weak topologies in the spaces $L^1(\mu_t)$, $L^1(\nu)$, and $\mathcal{B}(\R^t_+)$. It is therefore weakly compact in view of Step 2.
\end{proof}

\begin{rem}
While the motivation of Lemma~\ref{lem:compactness for D} is to construct some compactness for the space of semi-static strategies $(u,\Delta)$, we have to introduce the auxiliary random variable $W^k$ in \eqref{Theta} to ensure the convexity of $\mathcal{L}(\Phi,\delta,N)$, needed in the last step of the proof.  
\end{rem}

\begin{prop}\label{prop:D^N upward}
Consider $\{\Phi_i\}_{i\in\N}$ in $\cG(\Omega)$ for which there exists $K>0$ such that \eqref{Phi bounded below'} is satisfied for all $i\in\N$. If $\Phi_i\uparrow\Phi$,  then
\[
\sup_{i\in\N} {D}^N(\Phi_i) = {D}^N(\Phi),\quad \forall N\in\N.
\]
\end{prop}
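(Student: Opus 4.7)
The inequality $\sup_{i\in\N} D^N(\Phi_i) \le D^N(\Phi)$ is immediate from the monotonicity of $D^N$, so the work is in proving the reverse bound $D^N(\Phi) \le \sup_{i\in\N} D^N(\Phi_i)$. Assuming $\sup_{i\in\N} D^N(\Phi_i) < \infty$ (otherwise there is nothing to show), I will fix an arbitrary $\delta > \sup_{i\in\N} D^N(\Phi_i)$ and aim to produce a semi-static superhedge $(\tilde u, \tilde\Delta) \in L^1(\mu) \times \cH^N$ of $\Phi$ with $\mu(\tilde u) \le \delta$; this will give $D^N(\Phi) \le \delta$, and letting $\delta$ decrease to $\sup_{i\in\N} D^N(\Phi_i)$ finishes the argument. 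Since $\delta > D^N(\Phi_i)$ for every $i$, \lemref{lem:compactness for D} makes each $\cL(\Phi_i,\delta,N)$ a nonempty (weakly) compact set.

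The key structural fact I will exploit is that $\{\cL(\Phi_i,\delta,N)\}_{i\in\N}$ is a decreasing family: inspecting \lemref{lem:compactness for D}, only condition~(v) involves the contingent claim, and $\Phi_i \le \Phi_{i+1}$ yields $(\Phi_i+\Gamma)\wedge k \le (\Phi_{i+1}+\Gamma)\wedge k$ for every $k$, so $\cL(\Phi_{i+1},\delta,N) \subseteq \cL(\Phi_i,\delta,N)$. A decreasing sequence of nonempty compact sets has nonempty intersection, so I can fix some $(\Theta,\Delta) \in \bigcap_{i\in\N}\cL(\Phi_i,\delta,N)$ with $\Theta = \{(u^k_1,\ldots,u^k_T,W^k)\}_{k\in\N}$, and extract a strategy by passing to limits $k \to \infty$ and $i \to \infty$. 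By condition~(ii), $u_t(y) := \sup_{k\in\N}u^k_t(y) \in [0,\infty]$ is well-defined, and condition~(iii) together with the monotone convergence theorem gives $u_t \in L^1(\mu_t)$ with $\mu(u) \le \delta + (K+2N)(1 + m(\mu_1) + \cdots + m(\mu_T))$. For each fixed $x \in \Omega$, condition~(iv) forces $W^k(x) = 0$ once $k > \Lambda(x)$, so letting $k \to \infty$ in condition~(v) applied with $\Phi_i$ yields $\oplus u(x) \ge (\Phi_i + \Gamma)(x) + (\Delta\cdot x)_T$; sending $i \to \infty$ and using $\Phi_i \uparrow \Phi$ then produces $\oplus u(x) \ge (\Phi + \Gamma)(x) + (\Delta\cdot x)_T$ for every $x \in \Omega$. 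Finally, setting $\tilde\Delta := -\Delta \in \cH^N$ and $\tilde u_t(y) := u_t(y) - (K+2N)(1/T + y)$, a direct computation mirroring Step~1 of the proof of \lemref{lem:compactness for D} verifies $\Psi_{\tilde u,\tilde\Delta}(x) \ge \Phi(x)$ pointwise and $\mu(\tilde u) \le \delta$.

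The main obstacle I anticipate is a null-set technicality in the last step: the pointwise limit $u_t$ is only $\mu_t$-a.e.\ finite, so $\tilde u_t$ must be redefined on the $\mu_t$-null set $\{u_t = \infty\}$ to produce an $\R$-valued element of $L^1(\mu_t)$ while preserving the pointwise superhedging relation. I expect this to be settled by the same null-set modification device invoked in Step~3 of the proof of \lemref{lem:compactness for D}, together with the fact that the exceptional set $\bigcup_t \{x \in \Omega : x_t \in \{u_t = \infty\}\}$ is $\Q$-null for every $\Q \in \Pi$ (since each marginal is $\mu_t$); beyond this adjustment, all remaining verifications are routine monotone-convergence and bookkeeping arguments.
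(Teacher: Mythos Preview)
Your proposal is correct and follows essentially the same route as the paper's proof: fix $\delta > \sup_i D^N(\Phi_i)$, invoke \lemref{lem:compactness for D} to obtain the decreasing chain of nonempty compact sets $\{\cL(\Phi_i,\delta,N)\}_{i\in\N}$, pick an element of the intersection, pass to the limit $k\to\infty$ (using condition (iv) to kill $W^k$) and then $i\to\infty$, and finally shift by $(K+2N)(1/T+y)$ and flip the sign of $\Delta$ to produce a superhedge of cost at most $\delta$. The null-set technicality you flag in your last paragraph is also left implicit in the paper's own argument, which simply asserts $u_t\in L^1(\mu_t)$ without further comment.
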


\begin{proof}
Fix $N\in\N$. As $\Phi_i\uparrow\Phi$ clearly implies $\sup_{i\in\N}D^N(\Phi_i)\le D^N(\Phi)$, we focus on proving the ``$\ge$'' relation.  Assume $\sup_{i\in\N} {D}^N(\Phi)<\infty$, otherwise the proof would be trivial. Pick an arbitrary $\delta>\sup_{i\in\N} {D}^N(\Phi_i)$. By Lemma~\ref{lem:compactness for D}, $\{\mathcal{L}(\Phi_i,\delta,N)\}_{i\in\N}$ is a nonincreasing sequence of nonempty compact sets. We can therefore choose some $\left(\{(u^k_1,...,u^k_T,W^k)\}_{k\in\N},\Delta\right)\in\bigcap_{i\in\N}\mathcal{L}(\Phi_i,\delta,N_i)$. In view of conditions (i), (ii), and (iii) in Lemma~\ref{lem:compactness for D}, $u_t:=\lim_{k\to\infty}\uparrow u^k_t\in L^1(\mu_t)$ is well-defined, and $u=(u_1,...,u_T)$ satisfies
\begin{equation}\label{bdd by delta}
\mu(u)\le\delta+(K+2N)(1+m(\mu_{1})+\cdots+m(\mu_{T})).
\end{equation}
Moreover, condition (v) in Lemma~\ref{lem:compactness for D} implies that for each $k$ and $i$,
\[
\oplus u^k  \ge (\Phi_i+\Gamma)\wedge k+(\Delta\cdot x)_T-W^k.
\]
Recall from condition (iv) in Lemma~\ref{lem:compactness for D} that $W^k=0$ on $\{x:\Lambda(x)<k\}$. This in particular implies $W^k(x)\to 0$ for all $x\in\Omega$ as $k\to\infty$. Therefore, by taking $k\to\infty$ in the previous inequality, we get $\oplus u  \ge \Phi_i+\Gamma+(\Delta\cdot x)_T$. As $i\to\infty$, this yields 
\begin{equation}\label{no i}
\oplus u \ge \Phi+\Gamma+ (\Delta\cdot x)_T.
\end{equation}
Now, define $\bar u_t := u_t - (K+2N)(1/T+ x_t)$ for all $t=1,\cdots,T$. By \eqref{bdd by delta} and \eqref{no i},
\begin{align*}
\mu(\bar u) &= \mu(u) -(K+2N)(1+m(\mu_{1})+\cdots+m(\mu_{T}))\le \delta,\\
\oplus \bar u &= \oplus u -\Gamma \ge \Phi+ (\Delta\cdot x)_T.
\end{align*}
This readily implies ${D}^N(\Phi)\le\delta$. With $\delta>\sup_{i\in\N}{D}^N(\Phi_i)$ arbitrarily chosen, we conclude $\sup_{i\in\N}{D}^N(\Phi_i)\ge D^N(\Phi)$.
\end{proof}

\subsection{Complete Duality}

\begin{thm}\label{thm:P^N=D^N}
For any $\Phi\in\operatorname{USA}(\Omega)$ that satisfies \eqref{Phi bounded},
\begin{equation}\label{P^N=D^N}
{D}^N(\Phi)=P^N(\Phi),\quad \forall N\in\N. 
\end{equation}
Moreover, there exists an optimizer $(u,\Delta)\in L^1(\mu)\times\cH^N$ for $D^N(\Phi)$ whenever $D^N(\Phi)<\infty$.
\end{thm}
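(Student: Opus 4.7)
The plan is to use Choquet's capacitability theorem (Lemma~\ref{lem:capacity}) to lift the USC duality of Proposition~\ref{prop:duality usc N} to all $\Phi \in \USA(\Omega)$ satisfying \eqref{Phi bounded}, and then to extract an optimizer from the compactness in Lemma~\ref{lem:compactness for D}.

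First I would fix $K>0$ from the growth hypothesis and work on the class $\cG_K := \{\Phi \in \cG(\Omega) : |\Phi(x)| \le K(1+x_1+\cdots+x_T)\}$. Restricted to $\cG_K$, both $D^N$ and $P^N$ are Choquet capacities associated with $\USC(\Omega) \cap \cG_K$: monotonicity is immediate from \eqref{D} and \eqref{primal^N}; the upward continuity axiom follows from Propositions~\ref{prop:P^N upward} and \ref{prop:D^N upward}, applied to any $\Phi_i \uparrow \Phi$ in $\cG_K$, since the uniform lower bound $\Phi_i \ge -K(1+x_1+\cdots+x_T)$ provides \eqref{Phi bounded below'}; the downward continuity axiom on USC follows from Propositions~\ref{prop:P^N downward} and \ref{prop:D^N downward} applied to $\Phi_i \downarrow \Phi$ in $\USC(\Omega) \cap \cG_K$, since the uniform upper bound provides \eqref{Phi bounded below}. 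By Proposition~\ref{prop:duality usc N}, $D^N$ and $P^N$ coincide on $\USC(\Omega) \cap \cG_K$, so Lemma~\ref{lem:capacity} yields $D^N(\Phi) = P^N(\Phi)$ for every $\Phi \in \USA(\Omega) \cap \cG_K$, establishing \eqref{P^N=D^N}.

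For the optimizer, I would assume $D^N(\Phi) < \infty$ and choose a strictly decreasing $\delta_n \downarrow D^N(\Phi)$. Each $\cL(\Phi,\delta_n,N)$ is a nonempty weakly compact set by Lemma~\ref{lem:compactness for D}, and the sequence $\{\cL(\Phi,\delta_n,N)\}_{n\in\N}$ is nested decreasing since condition (iii) tightens as $\delta$ shrinks. The finite intersection property then supplies some $\bigl(\{(u_1^k,\dots,u_T^k,W^k)\}_{k\in\N}, \Delta\bigr) \in \bigcap_n \cL(\Phi,\delta_n,N)$. Following the last paragraph of the proof of Proposition~\ref{prop:D^N upward}, I would set $u_t := \lim_k\uparrow u_t^k$, which lies in $L^1(\mu_t)$ by conditions (i)--(iii) and satisfies $\mu(u) \le \delta_n + (K+2N)(1+m(\mu_1)+\cdots+m(\mu_T))$ for every $n$. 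Passing $k\to\infty$ in condition (v), using that $W^k(x) \to 0$ pointwise (condition (iv)), yields $\oplus u \ge \Phi + \Gamma + (\Delta\cdot x)_T$. Subtracting the affine correction $(K+2N)(1/T + x_t)$ from each $u_t$ produces the desired pair $(\bar u, \Delta) \in L^1(\mu) \times \cH^N$ with $\mu(\bar u) \le D^N(\Phi)$ and $\oplus \bar u + (\Delta\cdot x)_T \ge \Phi$ everywhere, so $(\bar u,\Delta)$ is optimal.

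The main nuance is not a single deep step but the asymmetric hypotheses of the four continuity propositions: upward continuity demands only a lower bound on $\Phi_i$, whereas downward continuity on USC demands only an upper bound. To fuse these into a single Choquet capacity one must restrict to $\cG_K$, where both bounds hold uniformly, and verify that Choquet's theorem respects this restriction—which it does, because any USC $\phi \le \Phi \in \USA(\Omega) \cap \cG_K$ already satisfies the upper bound and can be truncated from below by $-K(1+x_1+\cdots+x_T)$ without losing upper semicontinuity. The genuinely delicate compactness ingredient underlying both the upward continuity of $D^N$ and the construction of the optimizer is already packaged in Lemma~\ref{lem:compactness for D}.
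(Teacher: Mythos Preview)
Your approach is essentially the paper's. The only cosmetic difference is in Part~1: rather than restricting $D^N,P^N$ to $\cG_K$ and arguing that Choquet's theorem respects this restriction, the paper defines $\bar D^N(\varphi):=D^N\bigl(-\zeta^K\vee(\varphi\wedge\zeta^K)\bigr)$ (and analogously $\bar P^N$) on \emph{all} of $\cG(\Omega)$, so that Lemma~\ref{lem:capacity} applies verbatim with $\A=\USC(\Omega)$; this is exactly the truncation you describe, just packaged as a composition rather than a domain restriction. One small slip in your optimizer construction: from $\oplus u \ge \Phi+\Gamma+(\Delta\cdot x)_T$ you get $\oplus\bar u \ge \Phi+(\Delta\cdot x)_T$, hence $\oplus\bar u + ((-\Delta)\cdot x)_T \ge \Phi$, so the optimal pair is $(\bar u,-\Delta)$, not $(\bar u,\Delta)$---harmless since $\cH^N$ is symmetric, but the sign should be corrected.
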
 

\begin{proof}
Fix $N\in\N$. Define $\zeta^K(x):=K(1+x_1+...+x_T)$, with $K>0$ specified in \eqref{Phi bounded}. Consider the functionals $\bar P^N$ and $\bar D^N$ defined by
\[
\bar P^N(\varphi) := P^N(-\zeta^K\vee(\varphi\wedge\zeta^K))\quad \hbox{and}\quad \bar D^N(\varphi) := D^N(-\zeta^K\vee(\varphi\wedge\zeta^K)),\quad \hbox{for}\ \varphi\in\cG(\Omega).
\] 
In view of Propositions~\ref{prop:P^N upward} and \ref{prop:P^N downward} (resp. Propositions~\ref{prop:D^N downward} and \ref{prop:D^N upward}), $\bar P^N$ (resp. $\bar D^N$) is a Choquet capacity associated with $\USC(\Omega)$; recall Definition~\ref{defn:choquet-capcity}. Moreover, thanks to Proposition~\ref{prop:duality usc N}, $\bar D^N(\varphi) = \bar P^N(\varphi)$ for all $\varphi\in\USC(\Omega)$. We then conclude from Lemma~\ref{lem:capacity} that $\bar D^N(\varphi) = \bar P^N(\varphi)$ for all $\varphi\in\USA(\Omega)$. That is to say, $D^N(\varphi)=P^N(\varphi) $ for all $\varphi\in\USA(\Omega)$ satisfying $|\varphi|\le\zeta^K$, or \eqref{Phi bounded}. 

It remains to prove the existence of an optimizer for $D^N(\Phi)$. If $D^N(\Phi)<\infty$, take a real sequence $\{\delta_i\}$ such that $\delta_i\downarrow D^N(\Phi)$. By Lemma~\ref{lem:compactness for D}, $\{\mathcal{L}(\Phi,\delta_i,N)\}_{i\in\N}$ is a nonincreasing sequence of nonempty compact sets. We can therefore choose some $\left(\{(u^k_1,...,u^k_T,W^k)\}_{k\in\N},\Delta\right)\in\bigcap_{i\in\N}\mathcal{L}(\Phi,\delta_i,N)$. 
In view of conditions (i), (ii), and (iii) in Lemma~\ref{lem:compactness for D}, $u_t:=\lim_{k\to\infty}\uparrow u^k_t\in L^1(\mu_t)$ is well-defined, and $u=(u_1,...,u_T)$ satisfies
\begin{equation}\label{bdd by delta'}
\mu(u)\le D^N(\Phi)+(K+2N)(1+m(\mu_{1})+\cdots+m(\mu_{T})).
\end{equation}
Moreover, condition (v) in Lemma~\ref{lem:compactness for D} implies that for each $k$ and $i$,
\[
\oplus u^k  \ge (\Phi+\Gamma)\wedge k+(\Delta\cdot x)_T-W^k.
\]
As shown in the proof of Proposition~\ref{prop:D^N upward}, $W^k(x)\to 0$ for all $x\in\Omega$ as $k\to\infty$. Thus, by taking $k\to\infty$ in the previous inequality, we get $\oplus u  \ge \Phi+\Gamma+(\Delta\cdot x)_T$. Now, define $\bar u_t := u_t - (K+2N)(1/T+ x_t)$ for all $t=1,\cdots,T$. Then, $\oplus \bar u = \oplus u -\Gamma \ge \Phi+ (\Delta\cdot x)_T$. Moreover, by \eqref{bdd by delta'},
\begin{align*}
\mu(\bar u) &= \mu(u) -(K+2N)(1+m(\mu_{1})+\cdots+m(\mu_{T}))\le D^N(\Phi).
\end{align*}
This implies that, $(\bar u,-\Delta)\in L^1(\mu)\times\cH^N$ is an optimizer of ${D}^N(\Phi)$. 
\end{proof}

\begin{rem}\label{rem:cannot apply}
When we view $D$ and $P$, defined in \eqref{D} and \eqref{primal}, as functionals, arguments similar to (and simpler than) those in Sections~\ref{subsec:P^N} and \ref{subsec:D^N} yield the upward and downward continuity of $P$, as well as the downward continuity of $D$. However, the upward continuity of $D$ is obscure. Without the portfolio bound $N>0$, it is unclear how the space of semi-static strategies $(u,\Delta)\in L^1(\mu)\times \cH$ can be made compact under any topology, so that the upward continuity does not follow from the arguments in Proposition~\ref{prop:D^N upward}. 

In fact, since $D(\Phi)\neq P(\Phi)$ for some Borel measurable $\Phi$ (as shown in \cite[Example 3.1]{BNT17}), the upward continuity of $D$ must not hold. Otherwise, we could apply Choquet's capacitability theorem directly to the classical duality $D(\Phi)=P(\Phi)$ in Proposition~\ref{prop:duality usc}, extending it from upper semicontinuous $\Phi$ to upper semi-analytic ones (which include Borel measurable ones).
\end{rem}


\begin{rem}
Recall Example~\ref{eg:eg 8.1 from BNT}, where $0=P(\Phi)<D(\Phi)=1$. We will show that $P^N(\Phi) = D^N(\Phi)$ for all $N\in\N$. Fix $N\in\N$. Recall that $(u^*_1,u^*_2,\Delta^*_1)\equiv (1,0,0)$ is an optimizer of $D(\Phi)$. As $\Delta^*_1\in\cH^N$, $(u^*_1,u^*_2,\Delta^*_1)$ is also an optimizer of $D^N(\Phi)$, and thus $D^N(\Phi) = D(\Phi)=1$. On the other hand, consider $\{\Q_M\}_{M\in\N}$ in $\Pi$ constructed in \eqref{density}. By \eqref{A^N_2}, $\bA^{N}_2(\Q_M)= N\bA^{1}_2(\Q_M)  = \frac{N}{4M}$. It follows that
\[
P^N(\Phi) = \sup_{\Q\in\Pi} \left\{\E_{\Q}[\Phi]-\bA^N_2(\Q)\right\} \ge \lim_{M\to\infty} \left\{\E_{\Q_M}[\Phi]-\bA^N_2(\Q_M)\right\}=1.
\]
As $\Phi\le 1$ already implies $P^N(\Phi)\le 1$, we conclude $P^N(\Phi)= 1=D^N(\Phi)$. 
\end{rem}

\begin{rem}
$P^N(\Phi)$ in general does not admit an optimizer, unless $\Phi$ is upper semicontinuous. To illustrate, in Example~\ref{eg:eg 8.1 from BNT}, suppose that there exists $\Q_*\in\Pi$ such that $\E_{\Q_*}[\Phi] - \bA^N_2(\Q_*) = P^N(\Phi)=1$ for some $N\in\N$. Then, 
$
0\le \bA^N_2(\Q_*) = \E_{\Q_*}[\Phi]-1\le 0,
$    
which yields $\bA^N_2(\Q_*)=0$. By Proposition~\ref{prop:cM equivalence}, $\Q_*$ must belong to $\cM$ and thus coincide with $\P_0$. This, however, entails $\E_{\Q_*}[\Phi]-\bA^N_2(\Q_*)= 0$, a contradiction.   
\end{rem}


\section{Derivation of Theorem \ref{thm:main}}\label{pf.thm.main}
This section is devoted to proving Theorem \ref{thm:main}. 
To connect the portfolio-constrained duality \eqref{P^N=D^N} to the desired (unconstrained) duality \eqref{general duality}, it is natural to relax the constraint $N>0$ by taking $N\to\infty$, leading to the next result. 
Recall $D^\infty(\Phi)$ defined above \eqref{H^infty} and $\widetilde P(\Phi)$ defined in \eqref{general duality}.

\begin{prop}\label{prop: D.infty}
For any $\Phi\in\USA(\Omega)$ that satisfies \eqref{Phi bounded}, $D^\infty(\Phi)=\widetilde P(\Phi)$. 
\end{prop}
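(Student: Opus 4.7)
The plan is to establish the chain of equalities
\[
D^\infty(\Phi) \;=\; \lim_{N\to\infty} D^N(\Phi) \;=\; \lim_{N\to\infty} P^N(\Phi) \;=\; \widetilde P(\Phi),
\]
in which the middle equality is exactly Theorem~\ref{thm:P^N=D^N}. Only the outer two equalities demand new arguments, and the second of them is where the content of the proposition resides.

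For the first equality I would exploit monotonicity. Since $\cH^N\subset\cH^{N+1}\subset\cH^\infty$, the sequence $\{D^N(\Phi)\}$ is nonincreasing and bounded below by $D^\infty(\Phi)$. Conversely, for any $(u,\Delta)\in L^1(\mu)\times\cH^\infty$ feasible for $D^\infty(\Phi)$, the strategy $\Delta$ is uniformly bounded by some integer $N$, hence lies in $\cH^N$, so $(u,\Delta)$ is also feasible for $D^N(\Phi)$. Taking infimum over such pairs yields $\inf_{N\in\N}D^N(\Phi)\le D^\infty(\Phi)$ and equality follows.

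For the third equality, the key observation is that \eqref{Phi bounded} combined with $\cM\neq\emptyset$ provides uniform bounds: setting $C:=K(1+m(\mu_1)+\cdots+m(\mu_T))$, one has $|\E_\Q[\Phi]|\le C$ for every $\Q\in\Pi$, and $-C\le P^N(\Phi)\le C$ for every $N\in\N$ (the lower bound comes from restricting the sup in \eqref{primal^N} to $\Q\in\cM$, on which $\bA^N_T(\Q)=0$). For ``$\widetilde P(\Phi)\le \lim_N P^N(\Phi)$'', I take any $\{\Q_N\}\subset\Pi$ with $\Q_N\overset{\rho}{\to}\cM$ and any fixed $M\in\N$; using $\bA^M_T=M\bA^1_T$, the definition of $P^M(\Phi)$ gives $\E_{\Q_N}[\Phi]\le P^M(\Phi)+M\bA^1_T(\Q_N)$. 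Since $\bA^1_T(\Q_N)\to 0$, taking $\limsup_{N\to\infty}$ followed by $M\to\infty$ produces $\limsup_N \E_{\Q_N}[\Phi]\le \lim_M P^M(\Phi)$, whence $\widetilde P(\Phi)\le \lim_M P^M(\Phi)$. For the reverse inequality, I pick near-optimizers $\Q_N\in\Pi$ satisfying $\E_{\Q_N}[\Phi]-N\bA^1_T(\Q_N)\ge P^N(\Phi)-1/N$. The uniform bounds above force $N\bA^1_T(\Q_N)\le 2C+1$, so $\bA^1_T(\Q_N)\to 0$, i.e.\ $\Q_N\overset{\rho}{\to}\cM$. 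Moreover $\E_{\Q_N}[\Phi]\ge P^N(\Phi)-1/N$, so the definition of $\widetilde P(\Phi)$ yields $\widetilde P(\Phi)\ge \limsup_N \E_{\Q_N}[\Phi]\ge \lim_N P^N(\Phi)$.

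The only delicate step I anticipate is the verification that $\bA^1_T(\Q_N)\to 0$ for the near-optimizers of $P^N(\Phi)$; this is exactly where the growth condition \eqref{Phi bounded} is indispensable, for without a uniform bound on $\E_\Q[\Phi]$ the penalty $N\bA^1_T(\Q_N)$ need not be forced to be small. Beyond that, the argument is a clean chaining of monotonicity observations with the constrained duality of Theorem~\ref{thm:P^N=D^N}.
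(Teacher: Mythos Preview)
Your proposal is correct and follows essentially the same architecture as the paper: both route the argument through the constrained duality $D^N(\Phi)=P^N(\Phi)$ of Theorem~\ref{thm:P^N=D^N}, and both establish the crucial inequality $\lim_N P^N(\Phi)\le\widetilde P(\Phi)$ via the same near-optimizer construction, using \eqref{Phi bounded} to force $\bA^1_T(\Q_N)\to 0$ exactly as you anticipate.

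The one genuine difference lies in the opposite inequality. For $\widetilde P(\Phi)\le D^\infty(\Phi)$ the paper argues directly on the dual side: given $\{\Q_N\}$ with $\Q_N\overset{\rho}{\to}\cM$, it picks a function $h(N)\to\infty$ with $h(N)\bA^1_T(\Q_N)\to 0$ (e.g.\ $h(N)=1/\sqrt{\bA^1_T(\Q_N)}$), takes near-optimal superhedges $(u,\Delta)\in L^1(\mu)\times\cH^{h(N)}$, and integrates against $\Q_N$ to obtain $D^{h(N)}(\Phi)+1/N+\bA^{h(N)}_T(\Q_N)\ge \E_{\Q_N}[\Phi]$; letting $N\to\infty$ gives the bound without invoking Theorem~\ref{thm:P^N=D^N} for this direction. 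Your argument instead stays on the primal side: you freeze $M$, use the defining inequality $\E_{\Q_N}[\Phi]\le P^M(\Phi)+M\bA^1_T(\Q_N)$, send $N\to\infty$ first and then $M\to\infty$. Your two-parameter ``$N$ then $M$'' device is a bit cleaner than the paper's diagonal choice of $h(N)$, at the cost of making this direction also depend on the chain $D^\infty=\lim_N D^N=\lim_N P^N$; the paper's version keeps that direction self-contained on the dual side. Either way the content is the same.
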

\begin{proof}
First, we show that $D^\infty(\Phi)\ge \widetilde P(\Phi)$. Fix $\{\Q_N\}_{N\in\N}$ in $\Pi$ such that $\Q_N\overset{\rho}{\to}\cM$ (or equivalently, $\bA^1_T(\Q_N)\to 0$). We can choose some  nonnegative function $h$ such that $h(N)\to\infty$ and $h(N)\bA^1_T(\Q_N)\to 0$ (for instance, take $h(N):= 1/\sqrt{\bA^1_T(\Q_N)}$). For each $N\in\N$, there exist $(u,\Delta)\in L^1(\mu)\times\cH^{h(N)}$ with $\mu(u)< D^{h(N)}(\Phi)+1/N$ such that $\oplus u+(\Delta\cdot S)_T\ge \Phi$. If follows that 
\[
D^{h(N)}(\Phi)+1/N+ \bA^{h(N)}_T(\Q_N)\ge \mu(u) +\E_{\Q_N}[(\Delta\cdot S)_T] \ge \E_{\Q_N}[\Phi], 
\]
where the first inequality follows from the definition of $\bA^{h(N)}_T(\Q_N)$ in \eqref{def: A_t}. As $N\to\infty$ in the above inequality, since $\bA^{h(N)}_T(\Q_N) = h(N) \bA^1_T(\Q_N) \to 0$ and $D^{h(N)}(\Phi)\to D^\infty(\Phi)$ by definition, we get $D^\infty(\Phi)\ge \limsup_{N\to\infty} \E_{\Q_N}[\Phi]$. With $\{\Q_N\}_{N\in\N}$ arbitrarily chosen, we obtain $D^\infty(\Phi) \ge  \widetilde P(\Phi)$.

On the other hand, for any $N\in\N$, by the definition of $P^N(\Phi)$, we can take $\Q_N\in\Pi$ such that 
\begin{equation}\label{crucial estimate}
P^N(\Phi)\ge \E_{\Q_N}[\Phi]- \bA^{N}_T(\Q_N)> P^{N}(\Phi) - 1/N.  
\end{equation}
This, together with $\bA^{N}_T(\Q_N) = N \bA^1_T(\Q_N)$, shows that 
\begin{equation}\label{A_N to 0}
\bA^1_T(\Q_N)< \frac{\E_{\Q_N}[\Phi]-P^N(\Phi)+1/N}{N} \le \frac{C}{N},\quad \forall N\in\N.  
\end{equation}
Here, the constant $C>0$ can be chosen to be independent of $N$, thanks to \eqref{Phi bounded} and \eqref{finite moment}. This in particular implies $\bA^1_T(\Q_N)\to 0$. In view of \eqref{crucial estimate}, this yields 
\begin{equation}\label{P^N<limsup}
\lim_{N\to\infty} P^{N}(\Phi) = \lim_{N\to\infty}\{\E_{\Q_N}[\Phi]- \bA^{N}_T(\Q_N)\} \leq \limsup_{N\to\infty}\E_{\Q_N}[\Phi]\le \widetilde P(\Phi). 
\end{equation}
Finally, by taking $N\to\infty$ in the constrained duality \eqref{P^N=D^N} and using the above inequality, we obtain 
$D^\infty(\Phi) = \lim_{N\to\infty} P^N(\Phi)\le \widetilde P(\Phi)$.  
\end{proof}

In view of Proposition~\ref{prop: D.infty}, to obtain the desired duality \eqref{general duality}, it remains to show $D^{\infty}(\Phi) = D(\Phi)$ for all $\Phi\in\USA(\Omega)$ satisfying \eqref{Phi bounded}. That is, restricting to bounded trading strategies does not increase the cost of model-free superhedging. To this end, we need the following technical result. 

\begin{lem}\label{phi.eps approx}
Given $\Phi\in\cG(\Omega)$ that satisfies \eqref{Phi bounded}, we define $\Phi_n\in\cG(\Omega)$, for each $n\in\N$, by
\begin{equation}\label{Phi_n}
\Phi_{n}(x_1,...,x_T) := \Phi(x_1,...,x_T) 1_{\{x_1 \leq n, ..., x_T \leq n\}}(x_1,...,x_T),\quad \forall x=(x_1,...,x_T)\in\Omega. 
\end{equation}
For any $\eps>0$, there exists $n\in\N$ large enough such that 
\begin{equation}\label{Phi-Phi_n}
|\E_\Q[\Phi] - \E_\Q[\Phi_{n}] |< \eps,\quad\forall \Q\in\Pi.
\end{equation}
\end{lem}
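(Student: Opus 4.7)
The plan is to bound $|\E_\Q[\Phi]-\E_\Q[\Phi_n]|$ by a sum of expectations that depend only on the fixed marginals $\mu_1,\ldots,\mu_T$, so the bound is automatically uniform in $\Q\in\Pi$, and then show this bound vanishes as $n\to\infty$ using \eqref{finite moment}. Writing $A_n := \{x\in\Omega : \max_{1\le t\le T} x_t > n\}$, we have $\Phi - \Phi_n = \Phi\cdot 1_{A_n}$, so by the growth condition \eqref{Phi bounded},
\[
|\E_\Q[\Phi]-\E_\Q[\Phi_n]| \le \E_\Q[|\Phi|\cdot 1_{A_n}] \le K\,\E_\Q\bigl[(1+x_1+\cdots+x_T)\cdot 1_{A_n}\bigr].
\]

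Applying the union bound $1_{A_n}\le \sum_{s=1}^T 1_{\{x_s>n\}}$ reduces matters to controlling $\E_\Q[x_t\cdot 1_{\{x_s>n\}}]$ and $\E_\Q[1_{\{x_s>n\}}]$ for $s,t\in\{1,\ldots,T\}$. The diagonal cases $t=s$ and the constant term give $\int_{[n,\infty)} y\,d\mu_s(y)$ and $\mu_s([n,\infty))$, both functions only of the marginal $\mu_s$. For the cross terms ($t\neq s$), which a priori depend on the joint law of $(S_s,S_t)$ under $\Q$, the trick is to split on whether $x_t\le n$ or not:
\[
x_t\cdot 1_{\{x_s>n\}} \;\le\; n\cdot 1_{\{x_s>n\}} + x_t\cdot 1_{\{x_t>n\}} \;\le\; x_s\cdot 1_{\{x_s>n\}} + x_t\cdot 1_{\{x_t>n\}},
\]
where the second inequality uses the Markov-type bound $n\cdot 1_{\{x_s>n\}}\le x_s\cdot 1_{\{x_s>n\}}$. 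Taking expectations yields $\E_\Q[x_t\cdot 1_{\{x_s>n\}}] \le \int_{[n,\infty)} y\,d\mu_s(y) + \int_{[n,\infty)} y\,d\mu_t(y)$, again depending only on the marginals.

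Combining, $|\E_\Q[\Phi]-\E_\Q[\Phi_n]|$ is bounded by a finite $\Q$-independent linear combination of quantities of the form $\mu_s([n,\infty))$ and $\int_{[n,\infty)}y\,d\mu_s(y)$, each of which tends to $0$ as $n\to\infty$: the first since $\mu_s$ is a probability measure, and the second by dominated convergence together with the finite-first-moment assumption \eqref{finite moment}. Choosing $n$ large enough to make the sum smaller than $\eps$ gives \eqref{Phi-Phi_n}. The only conceptual step that is not entirely routine is the handling of the cross terms $x_t\cdot 1_{\{x_s>n\}}$: one must avoid invoking any joint-distribution information on $(S_s,S_t)$ under $\Q$, and the trick above is what reduces everything cleanly to marginal quantities that are common to all $\Q\in\Pi$.
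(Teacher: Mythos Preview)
Your argument is correct and essentially identical to the paper's: both bound $|\E_\Q[\Phi]-\E_\Q[\Phi_n]|$ by $K\,\E_\Q[(1+\sum_t x_t)\,1_{A_n}]$, apply the union bound $1_{A_n}\le\sum_s 1_{\{x_s>n\}}$, and reduce the cross terms $\E_\Q[x_t\,1_{\{x_s>n\}}]$ to marginal tail integrals via the same pointwise inequality (the paper writes it as $x_t\,1_{\{x_t\le n,\,x_s>n\}}\le x_s\,1_{\{x_s>n\}}$, which is your splitting trick in slightly different notation). The one point you skip that the paper addresses explicitly: since $\Phi\in\cG(\Omega)$ need not be Borel measurable and $\E_\Q[\cdot]$ denotes an \emph{outer} expectation (Section~\ref{subsec:notation}), the very first step $|\E_\Q[\Phi]-\E_\Q[\Phi_n]|\le \E_\Q[|\Phi|\,1_{A_n}]$ is not automatic, as outer expectations are not linear in general; the paper justifies this by passing to a minimal Borel measurable majorant $\Phi^*$ of $\Phi$, for which $\E_\Q[\Phi^*1_B]=\E_\Q[\Phi\,1_B]$ on every Borel set $B$, so that ordinary linearity applies to $\Phi^*$ and transfers back to $\Phi$.
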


\begin{proof}
Fix $\eps>0$. Let $\delta:= \frac{\eps}{K(T+T^2)}$. Thanks to \eqref{finite moment}, we can take $n\in\N$ large enough such that  
\begin{equation}\label{<eps}
\mu_t((n,\infty))<\delta \quad \hbox{and}\quad \int_{\{y> n\}} y d\mu_t(y) <\delta,\quad\forall t=1,...,T.
\end{equation}
For simplicity, we will write $\A = \{x\in\Omega: x_1 \leq n,..., x_T \leq n\}$. Observe that
\begin{equation}\label{A^c 1}
\A^c \subseteq  \bigcup_{t\in\{1,...,T\}}\{x\in\Omega: x_t > n\}. 
\end{equation}
Moreover, for each fixed $t=1,...,T$, 
\begin{align}\label{A^c 2}
\A^c =  \{x\in\Omega: x_t > n\} \cup \bigcup_{i\in\{1,...,T\}\setminus\{t\}}\{x\in\Omega: x_t \leq n\ \hbox{and}\ x_i>n\}. 
\end{align}
Now, for any $\Q\in\Pi$,  by \eqref{Phi bounded}, 
\begin{equation}\label{dif:phi-phi.eps}
\begin{split}
|\E_{\Q} &\left[\Phi \right] - \E_{\Q}\left[\Phi_{n} \right]| \le \E_{\Q}\left[|\Phi|1_{\A^c}\right]\le K \bigg(\E_{\Q}[1_{\A^c}] +  \sum_{t=1,...,T} \E_{\Q}[x_t1_{\A^c}(x) ] \bigg).
\end{split}
\end{equation}
The first inequality above requires the linearity of outer expectations; recall from Section~\ref{subsec:notation} that $\E_\Q[\cdot]$ denotes an outer expectation if the integrad need not be Borel measurable. While the linearity of outer expectations does not hold in general, it holds specifically here thanks to the definition of $\Phi_n$. Indeed, by \cite[Lemma 6.3]{Kosorok-book-08}, there exists $\Phi^*\in\cB(\Omega)$, a minimal Borel measurable majorant of $\Phi$, such that $\E_\Q[\Phi^*] = \E_\Q[\Phi]$ and $\E_\Q[\Phi^* 1_B] = \E_\Q[\Phi 1_B]$ for any Borel subset $B$ of $\Omega$. It follows that
\begin{equation*}
\E_{\Q} \left[\Phi \right] - \E_{\Q}\left[\Phi_{n} \right] = \E_{\Q} \left[\Phi^* \right] - \E_{\Q}\left[\Phi^*1_\A \right]= \E_{\Q}\left[\Phi^*1_{\A^c}\right]=\E_{\Q}\left[\Phi1_{\A^c}\right],
\end{equation*}
where the second equality follows from the linearity of standard expectations, as $\Phi^*$ and $\Phi^*1_\A$ are both Borel measurable.
 
Thanks to \eqref{A^c 1},
\[
\E_{\Q}[1_{\A^c}(x)] \le \sum_{t=1,...,T}\E_{\Q}[1_{\{x_t >n\}}(x)] = \sum_{t=1,...,T}\mu_t((n,\infty))<T\delta,
\]
where the last inequality follows from \eqref{<eps}. On the other hand, for any $t=1,...,T$, \eqref{A^c 2} implies
\begin{align*}
\E_{\Q}[x_t1_{\A^c}(x)] &= \E_{\Q}[x_t 1_{\{x_t > n\}}(x)] +\sum_{i\in\{1,...,T\}\setminus\{t\}} \E_{\Q}[x_t 1_{\{x_t \leq n,\ x_i>n\}}(x) ]\\
&\le  \E_{\Q}[x_t 1_{\{x_t > n\}}(x)] +\sum_{i\in\{1,...,T\}\setminus\{t\}} \E_{\Q}[x_i 1_{\{x_i>n\}}(x) ]\\
&= \sum_{i=1,...,T}  \E_{\Q}[x_i 1_{\{x_i > n\}}(x)] = \sum_{i=1,...,T} \int_{\{y> n\}} y d\mu_i(y) < T\delta, 
\end{align*}
where the last inequality follows from \eqref{<eps}. Hence, we conclude from \eqref{dif:phi-phi.eps} that $|\E_{\Q} \left[\Phi(x) \right] - \E_{\Q}\left[\Phi_{n}(x) \right]| \le K (T+T^2) \delta=\eps$, as desired. 
\end{proof}

\begin{coro}\label{coro:Phi-Phi_n}
If $D^\infty(\Phi)=D(\Phi)$ for all bounded $\Phi\in\USA(\Omega)$, then the same equality holds for all $\Phi\in\USA(\Omega)$ satisfying \eqref{Phi bounded}.
\end{coro}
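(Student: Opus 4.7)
The direction $D^{\infty}(\Phi)\ge D(\Phi)$ is immediate from the inclusion $\cH^{\infty}\subseteq\cH$; the task is the reverse. The plan is to approximate $\Phi$ by the truncations $\Phi_n$ from \eqref{Phi_n}, which are bounded, and then invoke the hypothesis. First observe that, with $\A:=\{x\in\Omega:x_t\le n,\ \forall t\}$ (closed, hence Borel), $\Phi_n$ remains in $\USA(\Omega)$: for $c\ge 0$, $\{\Phi_n>c\}=\{\Phi>c\}\cap\A$ is the intersection of an analytic set and a Borel set, and for $c<0$, $\{\Phi_n>c\}=(\{\Phi>c\}\cap\A)\cup\A^c$ is a union of an analytic set and a Borel set, both analytic.

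Fix $\eps>0$ and choose $n$ via Lemma~\ref{phi.eps approx} so that $|\E_\Q[\Phi]-\E_\Q[\Phi_n]|<\eps$ uniformly in $\Q\in\Pi$. By Theorem~\ref{thm:P^N=D^N}, $D^N=P^N$ on both $\Phi$ and $\Phi_n$, and the one-line bound $|\sup_\Q(\E_\Q[\Phi]-\bA^N_T(\Q))-\sup_\Q(\E_\Q[\Phi_n]-\bA^N_T(\Q))|\le\sup_\Q|\E_\Q[\Phi]-\E_\Q[\Phi_n]|<\eps$ yields $|D^N(\Phi)-D^N(\Phi_n)|<\eps$ for every $N\in\N$; sending $N\to\infty$ gives $|D^{\infty}(\Phi)-D^{\infty}(\Phi_n)|\le\eps$. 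Since $\Phi_n$ is bounded and upper semi-analytic, the hypothesis gives $D^{\infty}(\Phi_n)=D(\Phi_n)$, so it remains to show $D(\Phi_n)\le D(\Phi)+\eps$.

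Set $\psi(x):=K(1+x_1+\cdots+x_T)$. Since $\Phi\ge -\psi$ by \eqref{Phi bounded}, the pointwise inequality $\Phi_n\le\Phi+\psi\cdot 1_{\A^c}$ follows by splitting: on $\A$, both sides reduce to $\Phi$; on $\A^c$, one needs $0\le\Phi+\psi$. Since $D$ is subadditive (concatenating two semi-static superhedges yields one for the sum), it suffices to produce a static superhedge of $\psi\cdot 1_{\A^c}$ with cost at most $\eps$. I take $u^*_t(y):=K(1+Ty)\,1_{\{y>n\}}(y)$ and $\Delta^*\equiv 0$: on $\A^c$ with $S:=\{t:x_t>n\}\ne\emptyset$, using $x_t>n$ for $t\in S$ and $x_t\le n$ for $t\notin S$, a direct arithmetic check gives $\oplus u^*-\psi=K(|S|-1)+K(T-1)\sum_{t\in S}x_t-K\sum_{t\notin S}x_t\ge K(|S|-1)(1+nT)\ge 0$. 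Choosing $n$ large enough so that \eqref{<eps} holds with $\delta:=\eps/(KT(T+1))$ gives $\mu(u^*)\le KT(1+T)\delta=\eps$. Combining, $D^{\infty}(\Phi)\le D^{\infty}(\Phi_n)+\eps=D(\Phi_n)+\eps\le D(\Phi)+2\eps$, and letting $\eps\to 0$ concludes.

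The main obstacle is the explicit design and pointwise verification of the static hedge $u^*$: one must uniformly dominate $\psi\cdot 1_{\A^c}$ across every possible nonempty subset $S\subseteq\{1,\ldots,T\}$ of indices at which $x_t$ exceeds the threshold $n$, while simultaneously keeping the total cost under $\eps$; the interplay between the constants $K,T$ and the $n$-dependent tail bounds in \eqref{<eps} is what makes this achievable without any dynamic trading.
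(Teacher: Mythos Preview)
Your proof is correct, but the route diverges from the paper's in two places worth noting.

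For $D^\infty(\Phi_n)\to D^\infty(\Phi)$, the paper invokes Proposition~\ref{prop: D.infty} ($D^\infty=\widetilde P$) and passes the uniform bound \eqref{Phi-Phi_n} through $\widetilde P$; you instead apply Theorem~\ref{thm:P^N=D^N} at each level $N$ to get $|D^N(\Phi)-D^N(\Phi_n)|<\eps$ uniformly in $N$, then let $N\to\infty$. These are essentially equivalent moves---both rest on the constrained duality---so neither gains much over the other.

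The real difference is in bounding $D(\Phi_n)$ by $D(\Phi)$. The paper first shifts $\Phi$ by a fixed static portfolio $\oplus v$ with $v_t(y)=K(1/T+y)$ to make it nonnegative; then $\Phi_n\uparrow\Phi$ and plain monotonicity of $D$ gives $D(\Phi_n)\le D(\Phi)$ for free, after which the shift is undone via translation-invariance of $D$ and $D^\infty$ (the latter using Proposition~\ref{prop: D.infty} again). You skip the nonnegativity reduction entirely and instead construct an explicit static superhedge $u^*_t(y)=K(1+Ty)1_{\{y>n\}}$ of the tail $\psi\cdot 1_{\A^c}$, then use subadditivity of $D$. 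Your approach is more direct---one pass, no case split---but requires the pointwise verification over all configurations $S\subseteq\{1,\dots,T\}$; the paper's shift trick is shorter once set up and reuses Proposition~\ref{prop: D.infty}, at the cost of a two-stage argument. Either way the tail-integrability condition \eqref{<eps} is doing the work, and it is a nice observation that the same $\delta=\eps/(KT(T+1))$ used inside Lemma~\ref{phi.eps approx} is exactly what your cost bound needs.
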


\begin{proof}
First, we show that $D^\infty(\Phi)=D(\Phi)$ for all nonnegative $\Phi\in\USA(\Omega)$ satisfying \eqref{Phi bounded}. Given $\Phi\in\USA(\Omega)$ that is nonnegative and satisfies \eqref{Phi bounded}, consider $\Phi_n$, $n\in\N$, defined in \eqref{Phi_n}. As a product of $\Phi\in\USA(\Omega)$ and a nonnegative Borel measurable function, $\Phi_n$ also belongs to $\USA(\Omega)$, thanks to \cite[Lemma 7.30]{Bertsekas-Shreve-book}. In view of the estimate \eqref{Phi-Phi_n} and the definition of $\widetilde P$ in \eqref{general duality}, we deduce from Proposition~\ref{prop: D.infty} that $D^\infty(\Phi_n)=\widetilde P(\Phi_n)\to\widetilde P(\Phi)=D^\infty(\Phi)$. Now, note that every $\Phi_n$ is bounded, thanks to the fact that $\Phi$ satisfies \eqref{Phi bounded}. As the boundedness of $\Phi_n\in\USA(\Omega)$ implies $D^\infty(\Phi_n)=D(\Phi_n)$ for all $n\in\N$, we have
\[
D^\infty(\Phi)=\lim_{n\to\infty} D^\infty(\Phi_n)=\lim_{n\to\infty} D(\Phi_n)\le D(\Phi),
\]
where the inequality stems from $\Phi_n\uparrow\Phi$, thanks to the fact that $\Phi$ is nonnegative. Since $D^\infty(\Phi)\ge D(\Phi)$ by definition, we conclude that $D^\infty(\Phi)=D(\Phi)$. 

Now, take an arbitrary $\Phi\in\USA(\Omega)$ that satisfies \eqref{Phi bounded} (which need not be nonnegative). Consider $v=(v_1,...,v_T)\in L^1(\mu)$ defined by $v_t(y) := K(\frac1T+y)$ for $t=1,...,T$, where $K>0$ is taken from \eqref{Phi bounded}. As $\Phi$ satisfies \eqref{Phi bounded}, $\Phi+\oplus v$ is nonnegative. Indeed, $(\Phi+\oplus v)(x)\ge -K(1+x_1+...+x_T)+\sum_{t=1}^T K(\frac1T+x_t)=0$, for all $x\in\Omega$. Moreover, $\Phi+\oplus v$ again satisfies \eqref{Phi bounded}, with a possibly larger $K> 0$. Hence, we have 
\begin{equation}\label{11}
D^\infty(\Phi+\oplus v)=D(\Phi+\oplus v). 
\end{equation}
Note that $\E_\Q[\Phi+\oplus v] = \E_\Q[\Phi] + \mu(v)$ for all $\Q\in\Pi$. This, together with Proposition~\ref{prop: D.infty}, implies 
\begin{align}\label{22}
D^\infty(\Phi+\oplus v)=\widetilde P(\Phi+\oplus v)= \widetilde P(\Phi)+\mu(v) = D^\infty(\Phi)+\mu(v).
\end{align} 
On the other hand, by definition
\begin{align}
D(\Phi+\oplus v) &= \inf\{\mu(u): u\in L^1(\mu)\ \hbox{satisfying}\ \exists \Delta\in\cH\ \hbox{s.t.}\ \oplus u + (\Delta\cdot S)_T\ge \Phi+\oplus v\ \hbox{on}\ \Omega\}\nonumber\\
&= \inf\{\mu(u): u\in L^1(\mu)\ \hbox{satisfying}\ \exists \Delta\in\cH\ \hbox{s.t.}\ \oplus (u-v) + (\Delta\cdot S)_T\ge \Phi\ \hbox{on}\ \Omega\}\nonumber\\
&= \inf\{\mu(\tilde u)+\mu(v): \tilde u\in L^1(\mu)\ \hbox{satisfying}\ \exists \Delta\in\cH\ \hbox{s.t.}\ \oplus \tilde u + (\Delta\cdot S)_T\ge \Phi\ \hbox{on}\ \Omega\}\nonumber\\
&= D(\Phi)+\mu(v).\label{33}
\end{align}
On the strength of \eqref{22} and \eqref{33}, \eqref{11} yields $D^\infty(\Phi)=D(\Phi)$.
 \end{proof}

Now, we are ready to establish $D^{\infty}(\Phi) = D(\Phi)$ for all upper semi-analytic $\Phi$ satisfying \eqref{Phi bounded}. 

\begin{prop}\label{thm: sec.main}
For any $\Phi\in\USA(\Omega)$ that satisfies \eqref{Phi bounded}, $D^{\infty}(\Phi) = D(\Phi)$.
\end{prop}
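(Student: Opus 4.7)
The inequality $D^\infty(\Phi) \ge D(\Phi)$ is immediate, since $\cH^\infty \subseteq \cH$ enlarges the feasible set in the infimum defining $D(\Phi)$. My goal is therefore the reverse inequality $D^\infty(\Phi) \le D(\Phi)$. By Corollary~\ref{coro:Phi-Phi_n}, it suffices to handle bounded $\Phi \in \USA(\Omega)$, so assume $|\Phi| \le M$ from now on.

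Fix $\epsilon > 0$ and a near-optimal superhedge $(u,\Delta) \in L^1(\mu)\times\cH$ with $\mu(u) < D(\Phi) + \epsilon$ and $\oplus u + (\Delta\cdot x)_T \ge \Phi$ on $\Omega$. My plan is, for each $K\in\N$, to construct a bounded-strategy superhedge $(u^K,\Delta^K) \in L^1(\mu)\times\cH^K$ of $\Phi$ with $\mu(u^K) \le \mu(u) + \eta_K$ and $\eta_K \to 0$. Granting this, $D^\infty(\Phi) \le D^K(\Phi) \le \mu(u^K)$ for every $K$, and sending $K\to\infty$ and then $\epsilon\to 0$ finishes the argument. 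The natural candidate is the pointwise truncation $\Delta^K := (\Delta \vee -K)\wedge K$ together with $u^K := u + v^K$, where $v^K\in L^1(\mu)$ is a static (i.e., of the form $\oplus v^K$) majorant of the pathwise shortfall $\bigl((\Delta-\Delta^K)\cdot x\bigr)_T$. The entire difficulty collapses into producing such a $v^K$ with $\mu(v^K) \to 0$.

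An equivalent route is dual: by Proposition~\ref{prop: D.infty}, $D^\infty(\Phi) = \widetilde P(\Phi)$, so the target rewrites as $\limsup_N \E_{\Q_N}[\Phi] \le \mu(u)$ for every $\{\Q_N\}\subset\Pi$ with $\bA^1_T(\Q_N) \to 0$. Integrating the pointwise bound under $\Q_N \in \Pi$ (outer expectation on the left, ordinary expectation of the Borel majorant on the right) yields
\[
\E_{\Q_N}[\Phi] \le \mu(u) + \E_{\Q_N}[(\Delta\cdot S)_T],
\]
so it suffices to verify $\limsup_N \E_{\Q_N}[(\Delta\cdot S)_T] \le 0$. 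Decomposing $\Delta = \Delta^K + (\Delta-\Delta^K)$, the bounded piece obeys $\E_{\Q_N}[(\Delta^K\cdot S)_T] \le \bA^K_T(\Q_N) = K\bA^1_T(\Q_N)$, which vanishes as $N\to\infty$ for each fixed $K$; a diagonal extraction $K = K(N)$ with $K(N)\bA^1_T(\Q_N)\to 0$ then reduces the matter to controlling the tail contribution $\E_{\Q_N}[((\Delta-\Delta^K)\cdot S)_T]$ uniformly enough in $N$.

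The main obstacle --- precisely the ``delicate probabilistic estimations'' flagged in the introduction --- is this tail control. The ingredients available to exploit are: the uniform-in-$N$ tail control of $S_t$ provided by the fixed marginal $\mu_t$ shared by every $\Q_N \in \Pi$; the one-sided pathwise bound $(\Delta\cdot x)_T \ge -M - \oplus u(x)$, which constrains $(\Delta\cdot S)_T$ from below uniformly in $N$; and the $L^1(\mu)$-integrability of $u$, which restricts the typical size of $(\Delta\cdot S)_T$ on $\mu$-typical paths. A plausible implementation is backward induction over $t = T-1, \ldots, 1$: truncate $\Delta_t$ one step at a time and absorb the arising defect into an additional static option $v^K_{t+1}(x_{t+1})$ whose $\mu_{t+1}$-mass is bounded via the tails of $\mu_{t+1}$ on $\{x_{t+1} > K\}$. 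The technical heart, and the step I expect to be the hardest, is to propagate these estimates across all $T-1$ time steps without the errors compounding into something that fails to vanish as $K\to\infty$.
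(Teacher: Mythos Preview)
Your reduction to bounded $\Phi$ via Corollary~\ref{coro:Phi-Phi_n} and the eventual appeal to Proposition~\ref{prop: D.infty} match the paper. The genuine gap is the tail-control step, and the ingredients you list do not suffice. The one-sided bound $(\Delta\cdot S)_T \ge -M - \oplus u$ is a \emph{lower} bound and gives no leverage on the \emph{upper} tail of $\E_{\Q_N}[(\Delta\cdot S)_T]$; likewise, the marginal tails of $S_t$ say nothing about the size of $\Delta_t$ on $\{|\Delta_t|>K\}$ or about the measure of that set. With only these tools, the level-$K$ truncation produces a remainder $((\Delta-\Delta^K)\cdot S)_T$ whose $\Q$-expectation you cannot bound uniformly over $\Q\in\Pi$, and your proposed backward induction has nothing concrete to induct on.

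The missing structural insight, which the paper exploits, is that the superhedging inequality itself \emph{forces} $\Delta_t$ to be bounded on well-chosen sets. After replacing $u$ by nonnegative $\bar v$ (via Vitali--Carath\'eodory and a shift by $\ell$), one takes sets $\widetilde M_i \subseteq \{0\}\cup(1/m_i,m_i)$ on which $\bar v_i$ is bounded and $\mu_i(\widetilde M_i^c)$ is small. A contradiction argument then shows $\Delta_t$ is bounded on $\widetilde M_1\times\cdots\times\widetilde M_t$: if $|\Delta_t(x_1^n,\ldots,x_t^n)|\to\infty$ along such points, freeze $x_{t+1}=\cdots=x_T$ just outside $[1/m_t,m_t]$ to drive the left side of the superhedge to $-\infty$ while the right stays bounded below by $-C+T\ell$. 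The correct bounded strategy is therefore $\bar\Delta_t := \Delta_t\cdot 1_{\widetilde M_1\times\cdots\times\widetilde M_t}$, not a level truncation; this superhedges a modified payoff $\bar\Phi$ (equal to $\Phi+T\ell$ on the good set, and to the accumulated integral $(\Delta\cdot x)_{t^*}$ at the first exit time $t^*$ otherwise). The remaining work---choosing $\eps_t$ and $m_t$ iteratively so that $\E_\Q[\bar\Phi]\ge\E_\Q[\Phi+T\ell]-\eps$ uniformly in $\Q\in\Pi$---is where the actual bookkeeping lies, and it is where the uniformity over $\Pi$ (via the fixed marginals) is genuinely used.
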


\begin{proof}
First, by Corollary~\ref{coro:Phi-Phi_n}, we can assume without loss of generality that $\Phi\in\USA(\Omega)$ is bounded. We take $C>0$ such that $|\Phi|\le C$ on $\Omega$.
 
As $D^{\infty}(\Phi) \geq D(\Phi)$ by definition, we focus on proving the opposite inequality. 
Fix $\delta>0$. There exist $u=(u_1,...,u_T)\in L^1(\mu)$ and $\Delta\in \cH$ such that  
\begin{align}\label{u+(d.s)>phi'}
\mu(u)<D(\Phi)+\delta/2\quad \hbox{and}\quad \oplus u(x)+ (\Delta\cdot x)_T &\geq \Phi(x)\quad \forall x\in\Omega.  
\end{align}

{\bf Step 1:} We replace $u\in L^1(\mu)$ by nonnegative functions. 
By the Vitali-Carath\'{e}odory theorem, there exists $v=(v_1,...,v_T)\in L^1(\mu)$, with $u_t\le v_t $ and $v_t$ bounded from below for all $t=1,...,T$, such that $\mu(u)\le\mu(v)\le \mu(u)+\delta/2$. Take $\ell>0$ large enough such that $v_t\ge -\ell$ for all $t=1,...,T$. By setting $\bar v_t:= v_t+\ell\ge 0$, we deduce from \eqref{u+(d.s)>phi'} that  
\begin{align}\label{u+(d.s)>phi}
\mu(v)<D(\Phi)+\delta\quad \hbox{and}\quad \oplus \bar v(x)+ (\Delta\cdot x)_T &\geq \Phi(x)+T\ell\quad \forall x\in\Omega.  
\end{align}

{\bf Step 2:} We construct a bounded trading strategy $\bar \Delta\in \cH^\infty$ and replace \eqref{u+(d.s)>phi} by a superhedging relation involving $\bar\Delta$.  
Fix arbitrary $\eps_1,\eps_2,...,\eps_{T-1} >0$ that are sufficiently small. 
As $\bar v_1$ is $\mu_1$-integrable, by \cite[Problem 14, p.63]{MR1013117}, there exists $M_1\in\cB(\R_+)$ such that $\mu_1(\R_+\setminus M_1)<\eps_1$ and $\bar v_1$ is bounded on $M_1$. We can assume without loss of generality that $M_1$ contains $\{0\}$. Indeed, if $\mu_1(\{0\})=0$, adding $\{0\}$ to $M_1$ does not change the above statement; if $\mu_1(\{0\})>0$, then $M_1$ has to contain $\{0\}$ as long as $\eps_1<\mu_1(\{0\})$. For any $m_1>1$, define 
\[
\widetilde M_1:= M_1 \cap (\{0\}\cup(1/m_1,m_1)).
\]
Note that $\mu_1(\widetilde M_1)\uparrow \mu_1(M_1)$ as $m_1\to\infty$. Now, we claim that 
\[
\Delta_1\ \hbox{is bounded on}\ \widetilde M_1,\quad \forall m_1>1.
\]
By contradiction, suppose that there exist $\{x_1^n\}_{n\in\N}$ in $\widetilde M_1$ such that $\Delta_1(x_1^n) \to\infty$ or $-\infty$. By taking $x_1 = x_1^n$ and $x_2=x_3=...=x_T\in\R_+$ in the second part of \eqref{u+(d.s)>phi} and using the fact $|\Phi|\le C$, we get 
\begin{align}\label{dududu}
\bar v_1(x^n_1)+ \bar v_2(x_2)+\dots &+ \bar v_T(x_2) + \Delta_1(x^n_1)(x_2-x^n_1) \ge -C+T\ell.
\end{align}
For the case $\Delta_1(x^n_1) \to \infty$ (resp. $\Delta_1(x^n_1) \to -\infty$), we take $x_2=\frac{1}{2m_1}$ (resp. $x_2=m_1+1$) in \eqref{dududu}. As $n\to\infty$, by the boundedness of $\bar v_1$ on $\widetilde M_1$, the left hand side of  \eqref{dududu} tends to $-\infty$, a contradiction. 
Similarly to the above, by \cite[Problem 14, p.63]{MR1013117}, there exists $M_2\in\cB(\R_+)$, containing $\{0\}$, such that $\mu_2(\R_+\setminus M_2)<\eps_2$ and $\bar v_2$ is bounded on $M_2$. For any $m_2>1$, define 
\[
\widetilde M_2:= M_2 \cap (\{0\}\cup(1/m_2,m_2)),
\]
and note that $\mu_2(\widetilde M_2)\uparrow \mu_2(M_2)$ as $m_2\to\infty$. We claim that 
\[
\Delta_2\ \hbox{is bounded on}\ \widetilde M_1\times\widetilde M_2,\quad \forall m_1, m_2>1.
\]
By contradiction, suppose that there exist $\{(x_1^n,x_2^n)\}_{n\in\N}$ in $\widetilde M_1 \times \widetilde M_2$ such that $\Delta_2(x_1^n,x_2^n) \to \infty$ {or} $-\infty$. By taking $(x_1,x_2) = (x_1^n,x_2^n)$ and $x_3=x_4=...=x_T\in\R_+$ in the second part of \eqref{u+(d.s)>phi}  and using the fact $|\Phi|\le C$, we get 
\begin{align}\label{dudu}
\bar v_1(x^n_1)+ \bar v_2(x^n_2)+\bar v_3(x_3)+ ...+ \bar v_T(x_3) + \Delta_1(x^n_1)(x^n_2-x^n_1)+ \Delta_2(x^n_1,x^n_2 )&(x_3-x^n_2)\nonumber\\ 
&\ge -C+T\ell. 
\end{align}
For the case $\Delta_2(x_1^n,x_2^n) \to\infty$ (resp. $\Delta_2(x_1^n,x_2^n) \to-\infty$), we take $x_3=\frac{1}{2m_2}$ (resp. $x_3=m_2+1$) in \eqref{dudu}. As $n\to\infty$, by the boundedness of $\bar v_1$ (on $\widetilde M_1$), $\bar v_2$ (on $\widetilde M_2$), and $\Delta_1$ (on $\widetilde M_1$), the left hand side of  \eqref{dudu} tends to $-\infty$, a contradiction. 
By repeating the same argument for all $t=3,4,...,T-1$, we obtain $\{M_t\}_{t=1}^{T-1}$ in $\cB(\R_+)$ such that for each $t=1,...,T-1$, 
\begin{itemize}
\item [(i)] $\mu_t(M^c_t)=\mu_t(\R_+\setminus M_t)<\eps_t$;
\item [(ii)] $\mu_t(\widetilde M_t)\uparrow \mu_t(M_t)$ as $m_t\to\infty$, where $\widetilde M_t := M_t\cap(\{0\}\cup(1/m_t, m_t))$ for $m_t>1$;
\item [(iii)] $\Delta_t(x_1,x_2,\dots,x_t)\ \hbox{is bounded on}\ \widetilde M_1 \times \widetilde M_2 \times ...\times \widetilde M_t$. 
\end{itemize}
We also consider
\begin{equation}\label{a_t}
a_t := \sup_{\widetilde M_1 \times \widetilde M_2 \times ...\times \widetilde M_t} |\Delta_t(x_1,x_2,\dots,x_t)| <\infty,
\end{equation}
for all $t=1,...,T-1$, which will be used in Step 3 of the proof. 

Now, let us define the bounded strategy $\bar\Delta=\{\bar\Delta_t\}_{t=1}^{T-1}\in\cH^\infty$  by
\begin{equation*}
    \bar \Delta_t(x_1, x_2, \dots, x_t):=  \Delta_t(x_1, x_2, \dots, x_t) 1_{\widetilde M_1 \times... \times \widetilde M_t}(x_1,x_2, ..., x_t),\quad\forall t=1,...,T-1.
\end{equation*}
Also, for any $x=(x_1,...,x_T)\in\Omega$, we introduce 
\begin{align}\label{barPhi}
{\bar\Phi}(x) := (\Phi(x)+T\ell) 1_{ \widetilde M_1 \times...\times  \widetilde M_{T-1}}(x) 
+\sum_{t=2}^{T-1} (\Delta\cdot S)_{t} 1_{ \widetilde M_1 \times ...\times  \widetilde M_{t-1} \times  \widetilde{M}^c_t}(x_1,x_2,...,x_t).
\end{align}
We claim that 
\begin{align}\label{ine:dot.bar.phi}
\oplus \bar v(x) + (\bar \Delta\cdot x)_T \geq {\bar\Phi}(x),\quad \forall x\in\Omega.
\end{align} 
Indeed, for any $x\in\Omega$ such that $x_t \in \widetilde M_t$ for all $t=1,..., T-1$, the above inequality simply reduces to the second part of \eqref{u+(d.s)>phi}. For any $x\in\Omega$ such that $x_t \notin \widetilde M_t$ for some $t=1,...,T-1$, consider $t^* := \inf\{t\in\{1,2,...,T-1\}: x_t \notin \widetilde M_t\}$. Observe that 
\begin{align*}
\oplus \bar v(x) + (\bar \Delta\cdot x)_T&=  \oplus \bar v(x) + \Delta_1(x_1)(x_2-x_1)+\dots + \Delta_{t^*-1}(x_1,x_2,\dots,x_{t^*-1})(x_{t^*}-x_{t^*-1})\\
&=\oplus \bar v(x) + (\Delta\cdot x)_{t^*}\ge (\Delta\cdot x)_{t^*} = {\bar\Phi}(x),
\end{align*}
where the inequality follows from $\bar v_t\ge 0$, and the last equality is deduced from the definitions of $\bar\Phi$ and $t^*$. We therefore conclude that \eqref{ine:dot.bar.phi} holds.

{\bf Step 3:} We show that for any $\eps>0$, $\{\widetilde M_t\}_{t=1}^{T-1}$ can be constructed appropriately so that $\E_\Q[{\bar\Phi}]\ge \E_{\Q}[\Phi+T\ell]-\eps$ for all $\Q\in\Pi$. 
For any $\Q\in\Pi$, by \eqref{barPhi},  
\begin{align}\label{d.infty(dot.bar.phi)}
\E_\Q[{\bar\Phi}]&= \E_{\Q} \left[(\Phi+T\ell) 1_{\widetilde M_1 \times ... \times \widetilde M_{T-1}}\right]+ \sum_{t=2}^{T-1}\E_{\Q}\left[ (\Delta\cdot S)_{t} 1_{\widetilde M_1 \times ... \times \widetilde M_{t-1} \times \widetilde M^c_t} \right]. 
\end{align}
Note that 
\begin{align}\label{first term}
\begin{split}
\E_{\Q}&\left[ (\Phi+T\ell) \left(1- 1_{\widetilde M_1 \times ... \times \widetilde M_{T-1}}\right) \right] \\
&\leq (C+T\ell)\E_{\Q}\left[ 1_{\widetilde{M}_1^c}(x_1)+1_{\widetilde M_2^c}(x_2)+...+1_{\widetilde M_{T-1}^c}(x_{T-1})  \right]\\
&=(C+T\ell) \left( \mu_1(\widetilde M_1^c) +  \mu_2(\widetilde M_2^c)+\dots+ \mu_{T-1}(\widetilde M_{T-1}^c) \right).
\end{split}
\end{align}
On the other hand, for any $t=2,...,T-1$,
\begin{align*}
\E_{\Q}&[ (\Delta\cdot S)_{t} 1_{\widetilde M_1 \times ... \times \widetilde M_{t-1} \times \widetilde M_{t}^c}(x_1,x_2,...,x_t) ] \\
&=\E_{\Q}\left[\left\{\Delta_1(x_2-x_1)+...+ \Delta_{t-1}(x_{t}-x_{t-1}) \right\} 1_{\widetilde M_1}(x_1)... 1_{\widetilde M_{t-1}}(x_{t-1})1_{\widetilde M^c_i}(x_{t}) \right] \\
&\geq-\E_{\Q}\left[\left\{|\Delta_1|(x_2+x_1)+\dots+ |\Delta_{t-1}|(x_{t}+x_{t-1}) \right\} 1_{\widetilde M_1}(x_1)... 1_{\widetilde M_{t-1}}(x_{t-1})1_{\widetilde M^c_i}(x_{t}) \right] \\
&\geq-\E_{\Q}\left[\left\{a_1(m_2+m_1)+...+ a_{t-1}(x_{t}+m_{t-1}) \right\} 1_{\widetilde M_1}(x_1)... 1_{\widetilde M_{t-1}}(x_{t-1})1_{\widetilde M^c_t}(x_{t}) \right]\\
&\geq-\E_{\Q}\left[\left\{a_1(m_2+m_1)+...+ a_{t-1}(x_{t}+m_{t-1}) \right\} 1_{\widetilde M^c_t}(x_{t}) \right]\\
&= -\left[a_1(m_2+m_1)+...+ a_{t-2}(m_{t-1}+m_{t-2}) + a_{t-1}m_{t-1} \right]  \mu_i(\widetilde M^c_t) - a_{t-1}\int_{\widetilde M^c_t} y d\mu_t(y),
\end{align*}
where the first inequality follows from $x_i\ge 0$ and the second inequality is due to $y<m_i$ for all $y\in \widetilde M_i$ and $|\Delta_i|\le a_i$ on $\widetilde M_i$, for all $i=1,...,T-1$. We then deduce from \eqref{d.infty(dot.bar.phi)}, \eqref{first term}, and the previous inequality that 
\begin{align}
\E_\Q[{\bar\Phi}]&\ge \E_{\Q}[\Phi+T\ell] - (C+T\ell) \left( \mu_1(\widetilde M_1^c) +  \mu_2(\widetilde M_2^c)+ \mu_{T-1}(\widetilde M_{T-1}^c) \right)\nonumber\\
&\hspace{0.2in}- \sum_{t=2}^{T-1} \bigg(\left[a_1(m_2+m_1)+...+ a_{t-2}(m_{t-1}+m_{t-2}) + a_{t-1}m_{t-1} \right]  \mu_t(\widetilde M^c_t)\nonumber\\
&\hspace{3.75in} + a_{t-1}\int_{\widetilde M^c_t} y d\mu_t(y)\bigg).\label{sum}
\end{align}
The above inequality particularly requires the linearity of outer expectations, which holds here for $\Phi+T\ell$ and $(\Phi+T\ell)1_{\widetilde M_1 \times ... \times \widetilde M_{T-1}}$. This can be proved as in the discussion below \eqref{dif:phi-phi.eps}. 
We will show that every term on the right hand side of \eqref{sum}, except $\E_{\Q}[\Phi+T\ell]$, can be made arbitrarily small, by choosing $m_t$ and $a_t$ appropriately for all $t=1,...,T-1$. 

Fix $\eps>0$, and define $\eta:= \frac{\eps}{(C+T\ell)(T-1)+(T-2)}>0$. Taking $\eps_1=\eta$ in Step 2 gives $\mu_1(M^c_1)<\eta$. Since $\mu_1(\widetilde M_1)\uparrow \mu_1(M_1)$ as $m_1\to\infty$, we can pick $m_1>1$ large enough such that $\mu_1(\widetilde M^c_1)<\eta$. With $m_1$ chosen, $a_1\ge 0$ in \eqref{a_t} is then determined.  Given the fixed $m_1$ and $a_1$, we can take $\eps_2\in (0,\eps_1)$ small enough such that $a_1m_1 \eps_2+ a_{1}\int_{A} y d\mu_2(y) <\eta$ for all $A\in\cB(\R_+)$ with $\mu_2(A)<\eps_2$. Using this $\eps_2>0$ in Step 2 gives $\mu_2(M^c_2)<\eps_2$. Since $\mu_2(\widetilde M_2)\uparrow \mu_2(M_2)$ as $m_2\to\infty$, we can pick $m_2>1$ large enough such that 
the first term in the summation of \eqref{sum} is less than $\eta$, i.e. 
\[
a_1m_1 \mu_2(\widetilde M^c_2)+ a_{1}\int_{\widetilde M^c_2} y d\mu_2(y) <\eta.
\]
With $m_1, m_2$ chosen, $a_2\ge 0$ in \eqref{a_t} is then determined. Given the fixed $m_t$ and $a_t$ for $t=1,2$, we can take $\eps_3\in (0,\eps_2)$ small enough such that $\left[a_1(m_2+m_1)+ a_{2}m_{2} \right] \eps_3+ a_{2}\int_{A} y d\mu_3(y) <\eta$ for all $A\in\cB(\R_+)$ with $\mu_3(A)<\eps_3$. Using this $\eps_3>0$ in Step 2 gives $\mu_3(M^c_3)<\eps_3$. Since $\mu_3(\widetilde M_3)\uparrow \mu_3(M_3)$ as $m_3\to\infty$, we can pick $m_3>1$ large enough such that the second term in the summation of \eqref{sum} is less than $\eta$, i.e.
 \[
 \left[a_1(m_2+m_1)+ a_{2}m_{2} \right]  \mu_3(\widetilde M^c_3) + a_{2}\int_{\widetilde M^c_3} y d\mu_3(y)<\eta. 
 \]
By repeating the same argument for all $t=4,...,T-1$, we have $\mu_t(\widetilde M^c_t)$, $t=1,...,T-1$, and every term in summation of \eqref{sum} less than $\eta$. We then conclude from \eqref{sum} that 
\begin{equation}\label{>>>>}
\E_\Q[{\bar\Phi}]\ge \E_{\Q}[\Phi+T\ell] - ((C+T\ell)(T-1)+(T-2))\eta=\E_{\Q}[\Phi+T\ell] - \eps.    
\end{equation}

{\bf Step 4:} We establish $D(\Phi)\ge D^\infty(\Phi)$. For any $\eps>0$, as \eqref{>>>>} holds for all $\Q\in\Pi$, the definition of $\widetilde P(\Phi)$ in \eqref{general duality} implies $\widetilde P(\bar\Phi)\ge \widetilde P(\Phi+T\ell)- \eps$. By Proposition~\ref{prop: D.infty}, this in turn implies $D^\infty(\bar\Phi)\ge D^\infty(\Phi+T\ell)- \eps$. Now, observe that
\[
D(\Phi)+\delta+T\ell \ge \mu(v)+T\ell =\mu(\bar v)\ge D^\infty(\bar\Phi)\ge D^\infty(\Phi+T\ell)- \eps=D^\infty(\Phi)+T\ell- \eps,
\]
where the first inequality follows from the first part of \eqref{u+(d.s)>phi}, the second inequality is due to \eqref{ine:dot.bar.phi}, and the last equality is a direct consequence of Proposition~\ref{prop: D.infty}. As $\delta,\eps>0$ are arbitrarily chosen, we conclude $D(\Phi)\ge D^\infty(\Phi)$. 
\end{proof}  
  
Thanks to Propositions~\ref{prop: D.infty} and \ref{thm: sec.main}, the proof of Theorem \ref{thm:main} is complete.

\bibliographystyle{siam}
\bibliography{refs}

\end{document}